\numberwithin{equation}{section}
\newtheorem{theorem}{Theorem}
\newtheorem{lemma}{Lemma}[section]
\newtheorem{prop}[lemma]{Proposition}
\newtheorem{cor}[lemma]{Corollary}
\newtheorem*{rmk*}{Remark}
\newcommand{\dist}{\mathrm{dist}}
\newcommand{\meas}{\mathrm{meas}}
\newcommand{\openrm}{\mathrm{(}}
\newcommand{\closerm}{\mathrm{)}}
\newcommand{\op}{H_{\alpha, \lambda, \theta}}
\newcommand{\specU}{S(\alpha, \lambda)}
\newcommand{\dimension}{\mathrm{dim_H}}
\def\R{\mathbb{R}}
\def\N{\mathbb{N}}
\def\Q{\mathbb{Q}}
\begin{document}

\title[Zero Hausdorff Dimension Spectrum]{Zero Hausdorff Dimension Spectrum for the Almost Mathieu Operator.}
\author[Yoram Last and Mira Shamis]{Yoram Last\textsuperscript{1} and Mira Shamis\textsuperscript{2}}
\footnotetext[1]{Institute of Mathematics, The Hebrew University,
91904 Jerusalem, Israel. \mbox{E-mail}: ylast@math.huji.ac.il}
\footnotetext[2]{Current address: Department of Mathematics, The
Weizmann Institute of Science, Rehovot 7610001, Israel. E-mail:
mira.shamis@weizmann.ac.il}

\begin{abstract}
We study the almost Mathieu operator at critical coupling. We prove that 
there exists a dense $G_\delta$ set of frequencies for which the
spectrum is of zero Hausdorff dimension.
\end{abstract}

\maketitle




\section{Introduction}
\noindent The almost Mathieu operator $\op$\ is the discrete
one-dimensional Schr\"{o}dinger operator, acting on
$\ell^2(\mathbb{Z})$\ and defined by
\begin{equation}\label{almostMathieu}
\begin{array}{ll}
\op = \Delta + V_{\alpha, \lambda, \theta}, & \textrm{\ }
(\Delta\psi)(n) = \psi(n + 1) + \psi(n - 1),
\end{array}
\end{equation}
where
\begin{equation*}
(V\psi)(n) = \lambda\cos(2\pi{\alpha}n + \theta)\psi(n),
\end{equation*}
and $\alpha, \lambda, \theta \in\mathbb{R}$.\ Since
$H_{\alpha,\lambda, \theta}$\ is invariant under
$\alpha\rightarrow{\alpha \pm 1}$,\ $\theta\rightarrow{\theta \pm
2\pi}$,\ we may always assume that $0 \leq \alpha \leq 1$,\ $0 \leq
\theta \leq 2\pi$. Also, a sign change of $\lambda$  is equivalent to translation of $\theta$\ by $\pi$,
therefore we assume that $\lambda \geq 0$.

The almost Mathieu operator is one of the most studied concrete
models of one-dimensional Schr\"{o}dinger operators, with a rich
spectral theory and important applications in physics (see, e.g., \cite{AJ,J,last4,last_review}).

If $\alpha = \frac{p}{q}\in\mathbb{Q}$, where we may assume that $p
< q$, $p$ and $q$ are relatively prime, the operator is periodic,
and if $\alpha\in\mathbb{R\setminus{Q}}$, the operator is almost
periodic (meaning that the closure of the set $\{ V(\cdot +
j)\}_{j=-\infty}^{+\infty}$ is compact in $\ell^\infty$.)

In the periodic case the spectrum is the union of $q$\ bands (closed intervals),
possibly touching at the edges (endpoints). As $\theta$\ varies, these
bands move and their length may change, but different bands never
overlap other than at endpoints. In fact, it is known \cite{choi-elliot-yui,vmch}
that these bands are disjoint except for some special values of $\theta$ when $q$\
is even. In these cases, the two central bands are touching at $0$. 

In the almost periodic case, the spectrum $\sigma(\alpha, \lambda,
\theta)$ does not depend on $\theta$. In this case, for any
$\alpha\in\mathbb{R}\setminus\mathbb{Q}$ and $\lambda\neq 0$, it is
a Cantor set (a nowhere dense closed set with no isolated
points) and has Lebesgue measure $|4 - 2|\lambda\|$.
This theorem is the result of work by
numerous researchers over several decades (see \cite{last_review}).
The final step was accomplished in the work of Avila--Jitomirskaya \cite{AJ}.

\vspace{2mm}
For the case $|\lambda| = 2$, known as the critical case, and for every irrational
$\alpha$, the spectrum is a Cantor set of zero Lebesgue measure and the
question of its fractal dimension naturally arises.
Our main result is the following theorem:

\begin{theorem}\label{mainTheorem}
There exists a dense $G_\delta$\ set of $\alpha$'s,\ for which
\begin{equation*}
\dimension(\sigma(\alpha, 2, \theta)) = 0 \quad \text{for every $\theta$,}
\end{equation*}
where $\sigma(\alpha, 2, \theta)$\ is the spectrum of
$H_{\alpha,2,\theta}$ and $\dimension(\cdot)$\ denotes Hausdorff
dimension.
\end{theorem}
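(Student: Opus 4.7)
The natural approach is a Baire-category argument. For each positive integer $n$, I would define
\[
U_n = \bigl\{ \alpha \in [0,1] : H_\infty^{1/n}(\sigma(\alpha,2,\theta)) < 1/n \text{ for every } \theta \bigr\},
\]
where $H_\infty^s$ denotes the $s$-dimensional Hausdorff content. Any $\alpha \in \bigcap_n U_n$ has $\dim_H \sigma(\alpha,2,\theta) = 0$ for every $\theta$, so it suffices to show each $U_n$ is open and dense; the intersection is then the desired dense $G_\delta$. By the $\theta$-independence of the spectrum for irrational $\alpha$ recalled in the introduction, the quantifier over $\theta$ is automatic off $\mathbb{Q}$, whereas every rational $\alpha$ gives a spectrum of positive Lebesgue measure and is therefore excluded from $\bigcap_n U_n$.

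\emph{Openness} of $U_n$ should follow from the H\"older continuity of the spectrum in $\alpha$ in the Hausdorff metric, which is the content of \cite{vmch}. If $\sigma(\alpha_0,2,\theta_0)$ admits a cover by finitely many intervals whose $s$-content sums strictly to less than $1/n$, then a sufficiently small perturbation of $\alpha_0$ keeps each $\sigma(\alpha,2,\theta)$ inside a slight enlargement of the same cover, and the strict inequality survives.

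\emph{Density} is the heart of the matter. Given an open interval $J \subset [0,1]$, I would produce an irrational $\alpha \in J \cap U_n$ by picking a rational $p/q \in J$ with $q$ very large and then choosing $\alpha \in J$ sufficiently close to $p/q$. At the rational, $\bigcup_\theta \sigma(p/q,2,\theta)$ is a union of $q$ bands, and by continuity $\sigma(\alpha,2,\theta)$ lies in an $\epsilon$-neighborhood of this union for $\alpha$ close enough. It then remains to bound the $(1/n)$-Hausdorff content of this union of $q$ thickened bands by a quantitative bandwidth estimate at critical coupling.

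\emph{Main obstacle.} The cleanest unconditional estimate available, Last's bound $\sum_j |I_j| \leq C/q$ on the total bandwidth at critical coupling, is by itself too weak for the task: H\"older's inequality only yields $\sum_j |I_j|^s \leq C^s q^{1-2s}$, which forces the dimension below $1/2$ and not to zero. The essential technical work is to push past this obstruction, either by proving sharper individual bandwidth estimates or by exploiting a multiscale construction in which each band of $\sigma(p/q)$ is further subdivided using an even finer rational approximation, so that the cover at the outer scale is refined into many much shorter intervals. Arranging such multiscale control to hold at every level of approximation for a Baire-generic $\alpha$ is the crux of the argument, and I expect it to rely on transfer-matrix or discriminant-based analysis of the rational Harper operator going well beyond the total-measure theorem.
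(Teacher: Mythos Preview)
Your Baire-category framework is correct and parallels the paper's, and you have pinpointed the obstacle: a single rational approximation together with the total-bandwidth bound $\sum_j |I_j| \le C/q$ yields only $\dim_H \le 1/2$. However, your proposal stops exactly where the paper's contribution begins; the ``multiscale'' suggestion points in the right direction but does not identify the mechanism that drives the dimension to zero.

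What the paper actually does is a \emph{two-scale} covering (Lemma~\ref{lemma:suff}). One approximates $\alpha$ by $p_{j+1}/q_{j+1}$ and then splits $S(p_{j+1}/q_{j+1},2)$ according to the coarser discriminant $\Delta_{p_j/q_j,2}$. The set $J_\delta^c=\{E:|\Delta_{p_j/q_j,2}(E)|\le\delta\}$ is covered by $q_j$ intervals of total length $O(\delta/q_j)$, via the Last--Wilkinson sum rule (Lemma~\ref{last_lemma}) together with the elementary bound (\ref{eq.dot}). The crux is Theorem~\ref{lebesgueMeasure}: the portion of $S(p_{j+1}/q_{j+1},2)$ lying in $J_\delta$ has Lebesgue measure at most $C(q_j)\delta^{-1}\exp\{-\delta q_{j+1}/C(q_j)\}$, exponentially small in $q_{j+1}$. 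This comes from a Lyapunov-exponent lower bound for $H_{p_{j+1}/q_{j+1},2,0}$ on $J_\delta$, proved by interpolating between the two periodic potentials, controlling products of slowly varying $SL_2$ matrices (Theorem~\ref{matrixTheorem}), and using Surace's lemma to transfer the estimate from complex to real energy. Taking $\delta=q_j^{-j}$ and $q_{j+1}$ enormous relative to $q_j$, one obtains a cover of $S(\alpha,2)$ by $q_j$ intervals of total length $O(q_j^{-(j-1)})$ together with $q_{j+1}$ intervals of total length $O(q_{j+1}^{-(j-1)/2})$; Lemma~\ref{hausdorff_dim_lemma} then gives dimension $\le 1/(1+\beta)$ for every $\beta$, hence zero. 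The $G_\delta$ set is built directly as nested punctured neighbourhoods of rationals encoding these arithmetic constraints (Lemma~\ref{lemma:gdelta}), rather than via your Hausdorff-content sets $U_n$; that difference is cosmetic.
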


\vspace{2mm} It was believed until the mid 1990's that 
the box-counting  dimension $\dim_B(\sigma(\alpha,2,\theta))$ 
of the spectrum of the almost Mathieu operator is equal to
$\frac{1}{2}$\ for $|\lambda| = 2$\ and almost every $\alpha$;
see, e.g., Tang and Kohmoto \cite{conjhalf1}, Bell and Stinchcombe
\cite{conjhalf2}, Geisel, Ketzmerick and Petschel \cite{conjhalf3},
for numerical and heuristic arguments supporting this conjecture.
However, in 1994 Wilkinson--Austin \cite{wilkonson_austin}
provided numerical evidence that $\dim_B(\sigma(\alpha, 2, \theta))
= 0.498$\ for $\alpha = \frac{1 + \sqrt{5}}{2}$ (the golden mean)
and thus conjectured that $\dim_B(\sigma(\alpha, 2,
\theta)) < \frac{1}{2}$ for every irrational $\alpha$. They also
gave numerical and analytic evidence that $\dim_B(\sigma(\alpha_n,
2, \theta)) \rightarrow 0$\ as $n \rightarrow \infty$\ for
irrationals $\alpha_n$\ of the form
\[
\alpha_n = \frac{1}{n + \frac{1}{n + \frac{1}{n + \ldots}}}~,
\]
giving additional indication that fractal properties of
the spectrum $\sigma(\alpha, 2,
\theta)$ depend sensitively on the properties of $\alpha$.

As for Hausdorff dimension, J.~Bellissard (private communication to the first
author, circ.\ 1995) conjectured that there should
exist some $\beta \in (0, 1/2]$ such that
$\dimension(\sigma(\alpha, 2, \theta)) = \beta$ for almost every
$\alpha$. The only rigorous result on the Hausdorff dimension 
that we are aware of is in \cite{last3}, were it is proved that for a 
dense $G_\delta$ set of irrational
$\alpha$'s, the Hausdorff dimension of the spectrum is $\leq \frac{1}{2}$. The relevant
$\alpha$'s are characterized by
\[
\liminf_{q \to \infty} \inf_p q^4 \left| \alpha - \frac{p}{q} \right| < \infty~.
\]

\smallskip
We mention an immediate corollary of Theorem~\ref{mainTheorem} pertaining to the
integrated density of states. Recall that the integrated density of states
$\mathcal{N}_{\alpha, \lambda}(E)$ is a cumulative distribution function of a measure
supported on $\sigma(H_{\alpha,\lambda,\theta})$. For irrational $\alpha$,
it can be defined via its Stieltjes transform by
\[
\int f(E) d\mathcal{N}_{\alpha, \lambda}(E) = \frac{1}{2\pi} \int_0^{2\pi} 
\langle f(H_{\alpha,\lambda,\theta}) \delta_0, \delta_0 \rangle \, d\theta~, \quad f \in C(\mathbb{R})~,
\]
where $\delta_0(n)$ is $1$ if $n=0$ and $0$ otherwise.
Recalling Frostman's lemma (see, e.g., \cite{Falconer}), we obtain:

\begin{cor} There exists a dense $G_\delta$ set of $\alpha$'s for which
the integrated density of states $\mathcal{N}_{\alpha, 2}(E)$ is not H\"older 
continuous of any order $\gamma > 0$.
\end{cor}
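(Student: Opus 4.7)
The plan is to deduce the corollary directly from Theorem~\ref{mainTheorem} using the mass distribution principle, i.e., the easy half of Frostman's lemma. Let $A$ denote the dense $G_\delta$ set of $\alpha$'s produced by Theorem~\ref{mainTheorem}, so that for every $\alpha \in A$ and every $\theta$ we have $\dimension(\sigma(\alpha, 2, \theta)) = 0$. Since every $\alpha \in A$ is irrational, the spectrum is $\theta$-independent, and the Stieltjes-transform definition shows that $d\mathcal{N}_{\alpha,2}$ is a Borel probability measure supported on $\sigma(\alpha, 2)$.

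Suppose, towards a contradiction, that for some $\alpha \in A$ the IDS $\mathcal{N}_{\alpha,2}(E)$ is H\"older continuous of some positive order $\gamma > 0$. Monotonicity of $\mathcal{N}_{\alpha,2}$ then yields a constant $C > 0$ with
\[
d\mathcal{N}_{\alpha,2}\bigl(B(E, r)\bigr) \le \mathcal{N}_{\alpha,2}(E + r) - \mathcal{N}_{\alpha,2}(E - r) \le C r^\gamma
\]
for every $E \in \R$ and every $r > 0$. Thus $d\mathcal{N}_{\alpha,2}$ is a nontrivial finite measure supported on $\sigma(\alpha, 2)$ which satisfies a uniform Frostman-type ball estimate with exponent $\gamma$.

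The mass distribution principle (the lower-bound direction of Frostman's lemma; see \cite{Falconer}) then forces $\dimension(\sigma(\alpha, 2)) \ge \gamma > 0$, contradicting $\dimension(\sigma(\alpha, 2)) = 0$. Since $\alpha \in A$ was arbitrary and $A$ is dense $G_\delta$, the corollary follows. I do not anticipate a real obstacle: the argument is a one-step reduction to Theorem~\ref{mainTheorem} via a classical fact of geometric measure theory. The only point that deserves a line of verification is that the functional defined by the Stieltjes-transform formula is indeed the cumulative distribution function of an honest Borel measure on $\sigma(\alpha, 2)$, so that H\"older continuity of $\mathcal{N}_{\alpha,2}$ translates into the displayed ball bound; this is standard for ergodic Schr\"odinger operators.
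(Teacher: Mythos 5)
Your proof is correct and matches the paper's intended argument: the paper deduces the corollary from Theorem~\ref{mainTheorem} by simply citing Frostman's lemma, and you spell out precisely the mass-distribution-principle half of that lemma that does the work (a probability measure on $\sigma(\alpha,2)$ with a uniform ball bound of exponent $\gamma$ forces $\dimension(\sigma(\alpha,2)) \geq \gamma$, contradicting Theorem~\ref{mainTheorem}). No gaps; this is the one-line reduction the paper had in mind.
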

We note that the existence of irrational $\alpha$'s for which $\mathcal{N}_{\alpha,2}$ is not H\"older 
continuous of any order $\gamma > 0$ also follows from the work of Helffer--Sj\"ostrand \cite{HS}; see 
Bourgain \cite[Ch.\ 8]{Bourg}.

\medskip A key step in the proof of Theorem~\ref{mainTheorem} is the following theorem.
Fix $q\in\mathbb{Z^+}$\ and consider the almost Mathieu
operator $H_{\frac{p}{q}, 2, \theta}$,\ where $p < q$\ are
relatively prime. Let 
\[ S_-\left(\frac{p}{q},
2\right) \equiv \bigcap_{\theta}\sigma\left(\frac{p}{q}, 2, \theta\right)~, \quad S\left(\frac{p}{q},
2\right) \equiv \bigcup_{\theta}\sigma\left(\frac{p}{q}, 2, \theta\right)~.\]
 As discussed below,
$S_-(p/q, 2)$ consists of $q$ points; let
\begin{equation}\label{eqdefdist} \begin{split}
\Delta_{\frac{p}{q}, 2}(E) &\equiv \prod_{E' \in S_-(p/q, 2)} (E - E')~,\\
\dist\left(E, S_{-}\left(\frac{p}{q},
2\right)\right)&\equiv{\inf_{E^\prime\in{S_{-}(\frac{p}{q}, 2)}}|E - E^\prime|}~.
\end{split}\end{equation}
Denote
\[ \begin{split}
J_\delta^{(1)}&\equiv \{E: |\Delta_{\frac{p}{q}, 2}(E)| > \delta > 0\}~,\\
{J}_\delta^{(2)}&\equiv \left\{E: \dist\left(E, S_{-}\left(\frac{p}{q}, 2\right)\right) >
 \delta > 0\right\}~.
 \end{split}\]
%

\begin{theorem}\label{lebesgueMeasure}
For any $p/q$ and $0 < \delta$ there exists $\eta = \eta(p,q,\delta)>0$ 
such that for $|\frac{\widetilde{p}}{\widetilde{q}} -
\frac{p}{q}| < \eta(p,q,\delta)$
\begin{equation*}
\meas\left(S\left(\frac{\widetilde{p}}{\widetilde{q}}, \,2\right) \cap J_\delta\right) < \
\frac{C(p,q)}{\delta} \exp \left\{-\frac{\delta\widetilde{q}}{C(p,q)}\right\}~,
\end{equation*}
where $C(p, q)$\ depends only on $p$ and $q$,
 $\meas(\cdot)$\ denotes the Lebesgue measure, and $J_\delta$
 denotes either $J_\delta^{(1)}$ or $J_\delta^{(2)}$. For $J_\delta^{(1)}$,
 one may choose $\eta(p,q,\delta) = C^{-q} \delta^2$ and $C(p, q) = Cq$.
\end{theorem}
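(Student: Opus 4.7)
My plan is to use Chambers' formula at critical coupling, which expresses the spectrum of any rational $\alpha^*=p^*/q^*$ as the sublevel set of a monic polynomial of degree $q^*$: namely
\[
\operatorname{tr} M_{q^*}^{p^*/q^*}(E,\theta)\ =\ \Delta_{p^*/q^*,2}(E)+2(-1)^{q^*}\cos(q^*\theta),
\]
so $S(p^*/q^*,2)=\{|\Delta_{p^*/q^*,2}|\le 4\}$ and $S_{-}(p^*/q^*,2)$ is exactly its zero set. Writing $\widetilde\Delta := \Delta_{\widetilde p/\widetilde q,2}$, the theorem reduces to establishing an exponential lower bound on $|\widetilde\Delta|$ throughout $J_\delta^{(1)}$, and converting this into a Lebesgue measure bound by exploiting that $\widetilde\Delta$ is a polynomial of degree $\widetilde q$.

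For the lower bound I would decompose $\widetilde q = mq + r$ with $0\le r<q$ and factor
\[
M_{\widetilde q}^{\widetilde p/\widetilde q}(E,\theta)\ =\ R(E,\theta)\prod_{k=m-1}^{0}M_q^{\widetilde p/\widetilde q}\bigl(E,\theta+2\pi kq\,\widetilde p/\widetilde q\bigr),
\]
$R$ being a length-$r$ remainder factor. Smoothness of the transfer matrix in $\alpha$ gives $\|M_q^{\widetilde p/\widetilde q}(E,\phi)-M_q^{p/q}(E,\phi)\|\le Cq^2 C_0^q\,|\widetilde p/\widetilde q-p/q|$, and the choice $\eta=C^{-q}\delta^2$ keeps the accumulated approximation error subdominant. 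Applying Chambers at $p/q$, each replaced block has trace $\Delta_{p/q,2}(E)+2(-1)^q\cos\phi_k$ with $\phi_k$ an arithmetic progression of step $2\pi q^2(\widetilde p/\widetilde q-p/q)$ equidistributing on $[0,2\pi)$; on $J_\delta^{(1)}$ the base term exceeds $\delta$ in modulus, so a definite fraction of the factors are hyperbolic with expanding eigenvalue $\ge 1+c\delta$. A standard $SL(2,\R)$ product estimate then gives $\|\prod_{k=0}^{m-1}M_q^{p/q}(E,\phi_k)\|\ge\exp(c\widetilde q\,\delta/(Cq))$; transferring this via the comparison to $M_{\widetilde q}^{\widetilde p/\widetilde q}$ and then averaging $\operatorname{tr} M_{\widetilde q}^{\widetilde p/\widetilde q}(E,\theta)$ in $\theta$ (via Chambers at $\widetilde p/\widetilde q$) yields $|\widetilde\Delta(E)|\ge\exp(\widetilde q\,\delta/C(p,q))$ throughout $J_\delta^{(1)}$.

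With the pointwise lower bound in hand, $\widetilde\Delta$ is monotone on each of its at most $\widetilde q$ bands $\{|\widetilde\Delta|\le 4\}$, and a Markov-type inequality for polynomials converts the exponential lower bound on $|\widetilde\Delta|$ into one of the same order on $|\widetilde\Delta'|$; each band intersecting $J_\delta^{(1)}$ then has width at most $8/|\widetilde\Delta'|\lesssim\exp(-\widetilde q\,\delta/C(p,q))$, and summing over $\widetilde q$ bands yields the claimed $C(p,q)\delta^{-1}\exp(-\widetilde q\,\delta/C(p,q))$. For $J_\delta^{(2)}$, since $\Delta_{p/q,2}$ is monic of degree $q$, $\dist(E,S_{-}(p/q,2))>\delta$ forces $|\Delta_{p/q,2}(E)|\ge c\delta^q$ on any fixed bounded energy interval, so $J_\delta^{(2)}\subset J_{c\delta^q}^{(1)}$ and the $J^{(2)}$ bound follows from the $J^{(1)}$ one with a worsened $\eta$. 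The main obstacle is the $SL(2,\R)$ product estimate when $|\Delta_{p/q,2}(E)|<2$: the blocks are then hyperbolic only for part of the phases, and possible cancellation from elliptic blocks must be controlled using the equidistribution of the $\phi_k$; keeping the compounded $\alpha$-perturbation error well below the hyperbolic growth rate is what forces the explicit choice $\eta = C^{-q}\delta^2$.
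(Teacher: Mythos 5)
Your route is genuinely different from the paper's, but it has a fatal gap. You aim to establish a \emph{pointwise} lower bound $|\widetilde\Delta(E)|\ge\exp(\widetilde q\,\delta/C(p,q))$ throughout $J_\delta^{(1)}$ and then use $S(\widetilde p/\widetilde q,2)=\{|\widetilde\Delta|\le 4\}$. But such a pointwise bound cannot hold: it would force $S(\widetilde p/\widetilde q,2)\cap J_\delta^{(1)}=\varnothing$, and this contradicts the continuity of $S(\cdot,2)$ (Proposition~\ref{continuity_of_spectra}) — for $\widetilde p/\widetilde q$ close to $p/q$ the set $S(\widetilde p/\widetilde q,2)$ is Hausdorff-close to $S(p/q,2)$, which contains energies with $|\Delta_{p/q,2}(E)|$ near $4>\delta$, i.e.\ deep inside $J_\delta^{(1)}$. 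The correct statement is a measure bound, not a pointwise one: the exponential growth of transfer matrices holds only outside a small exceptional set (resonant energies), and the whole point of the paper's machinery — moving $E$ off the real axis to $E+i\epsilon$, estimating the Lyapunov exponent of the complexified problem, and then invoking Surace's lemma (Lemma~\ref{surace_lemma}) to control the measure of $\{|\gamma(E+i\epsilon)-\gamma(E)|\ge\eta\}$ — is precisely to quantify this exceptional set. Your scheme has no analog of that step.

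The second, acknowledged but unresolved, gap is the product estimate over elliptic blocks. In your decomposition the phases $\phi_k$ run over an arithmetic progression of step $2\pi q^2(\widetilde p/\widetilde q-p/q)$ that can wrap many times around the circle, so for $\delta<|\Delta_{p/q,2}(E)|<2$ a positive fraction of the blocks $M_q^{p/q}(E,\phi_k)$ are elliptic, and there is no ``standard $SL(2,\R)$ product estimate'' that controls the resulting cancellations. The paper avoids this entirely by a different construction: it introduces an \emph{intermediate} operator $\widetilde H$ whose potential equals the $\widetilde p/\widetilde q$-potential only on an initial interval of length $l_0q$ with $l_0\approx\widetilde c\sqrt\delta\,\widetilde q/q$, chosen so that the drifting phase $\theta_n$ stays in a window where $2\cos(\theta_j q)\ge 2-\delta/4$ and hence \emph{every} $q$-block is uniformly hyperbolic ($|\mathrm{Tr}\,T_j^{-1}|>2+\delta/2$). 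The remaining part of the axis is handled by the rough Green-function bound $\|G^I\|\le1/\epsilon$, which is affordable because $\epsilon$ has been fixed at the right exponential scale. Moreover, even within the good window the eigenvectors of consecutive blocks drift, which is why the paper proves Theorem~\ref{matrixTheorem} (products of $SL_2$ matrices with slowly varying eigendirections) rather than appealing to a generic hyperbolic-product lemma.

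Two smaller points. First, for $J_\delta^{(2)}$, your reduction $\dist(E,S_-(p/q,2))>\delta\Rightarrow|\Delta_{p/q,2}(E)|\ge c\delta^q$ is correct but far too lossy: plugging $\delta'=c\delta^q$ into the $J^{(1)}$ bound gives a rate $\exp(-c\delta^q\widetilde q/(Cq))$, which cannot be rewritten as $\exp(-\delta\widetilde q/C(p,q))$ with $C(p,q)$ independent of $\delta$, as the theorem demands. The paper instead uses the inequality $|E-E_\nu|\le e|\Delta(E)|/|\Delta'(E_\nu)|$ together with Lemma~\ref{last_lemma} ($|\Delta'(E_\nu)|\ge q$) to get $J_\delta^{(2)}\subset J_{\delta q/e}^{(1)}$, which preserves the linear dependence on $\delta$. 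Second, a hyperbolic block with $|\mathrm{Tr}|=2+c\delta$ has expanding eigenvalue $e^{\gamma}$ with $\gamma\approx c'\sqrt\delta$, not $1+c\delta$; this matters for tracking the exponent $\delta\widetilde q/q$ in the final bound (note the paper's $l_0\sim\sqrt\delta\,\widetilde q/q$ and $\gamma_j\gtrsim\sqrt\delta$ combine to give the advertised $\delta\widetilde q/q$).
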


This theorem  formalizes the following
observation inspired by 
the drawing of Hofstadter \cite{Hofstadter}. Consider some fixed
rational $\frac{p}{q}$,\ where $p$\ and $q$\ are relatively prime,
and consider a sequence $\frac{p_n}{q_n}\to \frac{p}{q}$. Then the spectrum naturally splits into
regions that correspond to the bands of the spectrum for
$\frac{p}{q}$; the bands that correspond to high order rationals
become exponentially small away from the center of the
corresponding region. 

\medskip
An important technical step in the proof of Theorem~\ref{lebesgueMeasure}
is the following result, which allows  to study the behavior of
solutions in certain intervals of energies and to obtain their
exponential growth for energies outside of the spectrum of the
operator. We denote by $\operatorname{Tr} T$ and $\det T$ the trace
and determinant (respectively) of a square matrix $T$.

\begin{theorem}\label{matrixTheorem}
Fix $0 < \beta < 1$. Suppose that $T_n \in SL_2(\mathbb{C})$, $1 \leq n \leq N$,
are matrices with eigenvalues $e^{\pm(\gamma_n+ i \zeta_n)}$, $\gamma_n > 0$,
and that the corresponding eigenvectors ${\phi}^{\pm}_n$ can be chosen
so that $\|{\phi}^{\pm}_n\| = 1$ and 
\[ \frac{4 \max(\|\phi^{+}_n- \phi^{+}_{n+1}\|,
\|\phi^{-}_n -
\phi^{-}_{n+1}\|)}{|\det (\phi_{n+1}^+, \phi_{n+1}^-)|} <
\beta\gamma_n{e^{-\gamma_n}}~.\]
Then the matrix product $\Phi_N = T_N\ldots T_1$ satisfies
\[
(1 - \beta)e^{(1 - \beta){\sum_{n=1}^N {\gamma _n}}}\leq\|\Phi_N
\phi^{+}_1\|\leq (1 + \beta)e^{(1 + \beta){\sum_{n=1}^N {\gamma
_n}}}.
\]
\end{theorem}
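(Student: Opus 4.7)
My plan is to track, at each step, how closely the iterate $\psi_n := T_n\cdots T_1 \phi^+_1$ remains aligned with the expanding eigendirection of the next matrix. Writing $\psi_{n-1} = \alpha_n \phi^+_n + \beta_n \phi^-_n$ in the eigenbasis of $T_n$, with initial data $\alpha_1 = 1$, $\beta_1 = 0$, one has
\[ \psi_n \ =\ T_n \psi_{n-1}\ =\ e^{\gamma_n+i\zeta_n}\alpha_n \phi^+_n + e^{-\gamma_n-i\zeta_n}\beta_n \phi^-_n, \]
and re-expanding in the basis $\{\phi^+_{n+1}, \phi^-_{n+1}\}$ via Cramer's rule produces an explicit recursion for $(\alpha_{n+1}, \beta_{n+1})$. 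The change-of-basis coefficients take the form $\det(\phi^{\pm}_n, \phi^{\pm}_{n+1})/d_{n+1}$, where $d_n := \det(\phi^+_n, \phi^-_n)$, and multilinearity identities such as $\det(\phi^+_n, \phi^-_{n+1}) = d_{n+1} + \det(\phi^+_n - \phi^+_{n+1}, \phi^-_{n+1})$, combined with the hypothesis, bound each such coefficient within $\delta_n < (\beta/4)\gamma_n e^{-\gamma_n}$ of either $0$ or $1$ in modulus, where $\delta_n := \max(\|\phi^+_n - \phi^+_{n+1}\|, \|\phi^-_n - \phi^-_{n+1}\|)/|d_{n+1}|$.

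These estimates yield
\[ (1-\delta_n)e^{\gamma_n}|\alpha_n| - \delta_n e^{-\gamma_n}|\beta_n|\ \leq\ |\alpha_{n+1}|\ \leq\ (1+\delta_n)e^{\gamma_n}|\alpha_n| + \delta_n e^{-\gamma_n}|\beta_n|, \]
together with $|\beta_{n+1}| \leq \delta_n e^{\gamma_n}|\alpha_n| + (1+\delta_n)e^{-\gamma_n}|\beta_n|$. The central step is to prove by induction that the ratio $r_n := |\beta_n|/|\alpha_n|$ stays below $\beta/2$. The induced recursion
\[ r_{n+1}\ \leq\ \frac{\delta_n + (1+\delta_n)e^{-2\gamma_n}r_n}{(1-\delta_n) - \delta_n e^{-2\gamma_n}r_n} \]
preserves $r_n \leq \beta/2$ because $\delta_n$ is dominated by a small multiple of $(1 - e^{-2\gamma_n})$, which follows from the hypothesis together with the elementary inequality $\gamma e^{-\gamma} \leq 1 - e^{-2\gamma}$ (equivalently $\gamma \leq 2\sinh\gamma$) valid for $\gamma > 0$.

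Once $r_n$ is uniformly bounded, the $\alpha$-recursion simplifies to $|\alpha_{n+1}|/|\alpha_n| \in [(1-\tfrac{3}{2}\delta_n)e^{\gamma_n},\ (1+\tfrac{3}{2}\delta_n)e^{\gamma_n}]$. Telescoping, and using $\sum_n \delta_n \leq (\beta/4)\sum_n \gamma_n e^{-\gamma_n} \leq (\beta/4)\sum_n \gamma_n$ together with the standard estimates $\prod_n(1+c\delta_n) \leq \exp(c\sum_n\delta_n)$ and $\prod_n(1-c\delta_n) \geq \exp(-2c\sum_n\delta_n)$, I obtain
\[ e^{(1-3\beta/4)\sum_n \gamma_n}\ \leq\ e^{\gamma_N}|\alpha_N|\ \leq\ e^{(1+3\beta/8)\sum_n \gamma_n}. \]
Finally, since $\|\phi^{\pm}_N\| = 1$, the triangle inequality places $\|\psi_N\|$ in the interval $[(1 - r_N e^{-2\gamma_N})e^{\gamma_N}|\alpha_N|,\ (1 + r_N e^{-2\gamma_N})e^{\gamma_N}|\alpha_N|]$, and the bound $r_N \leq \beta/2$ converts this into a multiplicative factor $(1 \pm \beta/2)$; combining with the previous bounds on $e^{\gamma_N}|\alpha_N|$ produces the claimed inequalities $(1-\beta)e^{(1-\beta)\sum \gamma_n} \leq \|\psi_N\| \leq (1+\beta)e^{(1+\beta)\sum \gamma_n}$ with some slack.

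The main obstacle I expect is the induction on $r_n$, which must close uniformly as $\gamma_n$ ranges over $(0,\infty)$ --- in particular in the delicate regime $\gamma_n \to 0$, where both the hypothesis $\delta_n < (\beta/4)\gamma_n e^{-\gamma_n}$ and the contraction factor $e^{-2\gamma_n}$ in the ratio recursion degenerate at the same linear rate. The specific weight $\gamma_n e^{-\gamma_n}$ appearing in the hypothesis is there precisely to balance these two effects through the elementary comparison above; identifying this balance is what makes the whole argument go through.
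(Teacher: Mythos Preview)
Your approach is essentially identical to the paper's: decompose $\Phi_{n-1}\phi_1^+$ in the eigenbasis of $T_n$, derive the two-term recursion for the coefficients via the change-of-basis matrix (the paper writes this as $(I+\rho_n)\Lambda_n$ with $\rho_n=U_{n+1}^{-1}(U_n-U_{n+1})$, you via Cramer's rule), prove by induction that the ratio of contracting to expanding coefficients stays bounded, and then telescope. The paper takes the invariant $|B_n|\le\beta|A_n|$ rather than your $r_n\le\beta/2$, and this matters: with the elementary inequality you quote, $\gamma e^{-\gamma}\le 1-e^{-2\gamma}$, the induction $r_n\le\beta/2$ does \emph{not} close for $\beta$ close to $1$ (the worst case $\gamma_n\to 0$ reduces to $(\beta/4)(1+\beta/2)^2\le\beta/2$, which fails for $\beta>2(\sqrt2-1)\approx0.83$). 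The fix is trivial---either use the sharper $\gamma e^{-\gamma}\le\tfrac12(1-e^{-2\gamma})$, i.e.\ $\gamma\le\sinh\gamma$, or take the invariant $r_n\le\beta$ as the paper does, which closes exactly at $\beta<1$---but as written your induction step has a small gap for large $\beta$.
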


The rest of the paper is organized as follows. In
Section~\ref{s:am.prel} we describe some preliminaries and
previously obtained results. In Section~\ref{sec:matrixTheorem} we
prove Theorem~\ref{matrixTheorem} and in
Section~\ref{sec:lebesgueMeasure} we prove Theorem~\ref{lebesgueMeasure}.
Finally, in Section~\ref{sec:mainTheorem},
we prove Theorem~\ref{mainTheorem}.
One of the steps in the proof of Theorem~\ref{lebesgueMeasure} is inspired
by the work of Surace \cite{surace}, whereas the deduction of Theorem~\ref{mainTheorem}
makes use of several arguments from \cite{last3}.

\medskip
\paragraph{\bf Acknowledgements} YL is grateful to F.~Klopp and A.~Fedotov for helpful
discussions during preliminary stages of this project. MS is grateful to Qi Zhou for
helpful comments on the text. The authors are also very grateful to the referee of this paper
for suggesting several corrections and improvements.
This research was supported in part by The Israel Science Foundation (Grants No.\ 1169/06 and 1105/10)
and by Grants No.\ 2010348 and 2014337 from the United States--Israel Binational Science Foundation (BSF),
Jerusalem, Israel.

\section{Preliminaries}\label{s:am.prel}

\paragraph{\bf Periodic operators}
We start with some general facts about periodic operators of period
$\widetilde{q}$. Namely, we start with an operator $H$\ that acts on $\ell^2(\mathbb{Z})$, of the form 
\[
H = \Delta + V, \ 
(\Delta\psi)(n) = \psi(n + 1) + \psi(n - 1),\
\]
where $V$\ is periodic sequence of period $\widetilde{q}$.
Consider the associated eigenvalue equation
\begin{equation}\label{eigenvalue_eq}
\psi(n + 1) + \psi(n - 1) + V(n)\psi(n) = E\psi(n),
\end{equation}
where $E\in\mathbb{R}$ and $\psi$ is a two-sided sequence,
$\psi:{\mathbb Z}\rightarrow{\mathbb C}$.\

It is known that for such an operator the Lyapunov exponent exists for every $E\in\R$ and
it is determined by the one period transfer matrix
$\Phi_{\widetilde{q}}(E)$ as follows:
\begin{equation*}
\gamma(E) \equiv \lim_{n\to\infty}{\frac{1}{n}\ln{\|{\Phi_n(E)}\|}}
=
\frac{1}{\widetilde{q}}\ln[{\mathrm{Spr\,}(\Phi_{\widetilde{q}}(E))}],
\end{equation*}
where $\mathrm{Spr\,}(\cdot)$ is the spectral radius and
\begin{multline*}
\Phi_n(E) = T_n(E) \cdots T_2(E) T_1(E)~, \quad
T_j(E) = 
\left(\begin{array}{cc}
E - V(j) & -1 \\
  1      &   0  \\
\end{array}\right)~.\end{multline*}
 Since $\det{\Phi_n(E)} = 1$ (for all $n$) we have that
for
\begin{equation}\label{eq0}
E \in \mathcal{A}\equiv{\{E\in{\mathbb R}:\gamma(E) = 0\}},
\end{equation}
$\Phi_n(E)$ has eigenvalues presentable as
$e^{\pm{ik(E){\widetilde{q}}}},$\ with $k(E)\in\mathbb{R}$  ($k(E)$\ is the so
called Bloch wave number). For $E \notin \mathcal{A}$ the
eigenvalues of $\Phi_n(E)$ are given by $e^{\pm \gamma(E) {\widetilde{q}}}$, where
$\gamma(E)$ is the Lyapunov exponent.

The corresponding eigenvectors give rise (as initial conditions) to
the solutions $\{\psi^\pm(n)\}_{n \in \N}$ of the equation $H\psi =
E \psi$ that satisfy
\[ \psi^\pm(n+{\widetilde{q}}) =
    \begin{cases}
        e^{\pm \pi i k {\widetilde{q}}} \psi^\pm (n)~, &E \in \mathcal{A}~, \\
        e^{\pm \gamma {\widetilde{q}}} \psi^\pm (n)~, &E \notin \mathcal{A}~.
    \end{cases} \]
Therefore  the operator $H - E$ has
bounded inverse if and only if $E \notin \mathcal{A}$, hence $\sigma
= \mathcal{A}$.

We define the restriction $H^{[k,\infty)}$ of a periodic operator $H$ to
an interval $[k,\infty)\subset\mathbb Z$ as follows:
\begin{equation*}\begin{split}
(H^{[k,\infty)}\psi)(n) &= \psi(n + 1) + \psi(n - 1) + V(n)\psi(n) 
\quad \textrm{for $n
> k$},\\
(H^{[k,\infty)}\psi)(k) &= \psi(k + 1) + V(k)\psi(k)~.
\end{split}
\end{equation*}

Doing this allows us to use information on sub-intervals to gain
information about $H$\ on $\mathbb{Z}$.\ Define
the  Green functions of $H^{[k,\infty)}$ via
\begin{equation}\label{eqGakinfty}
G^{[k,\infty)}(n,m;z)\equiv{\langle{\delta_m,(H^{[k,\infty)} - z
)^{-1}\delta_n}\rangle}~,
\end{equation}
where $\delta_j$\ is the vector in $\ell^2(\mathbb{Z})$,\ with
entries $\delta_j(n)$\ that are $1$\ if $n = j$\ and $0$\ if
$n\neq{j}$. 
From the second resolvent identity we have:
\begin{equation}\label{eqResolventIdentity}
G^{[k,\infty)}(k,l;E) =
G^{[k,\infty)}(k,n;E)G^{[n+1,\infty)}(n+1,l;E)
\end{equation}
for $k < n < l$. For a periodic potential of period
$\widetilde{q}$ and for $G^{[1, \infty)}(1,n; E)$ being the
appropriate Green function, iterative application of
(\ref{eqResolventIdentity}) yields:
\begin{equation}\label{eq:**}\begin{split}
&G^{[1, \infty)}(1, m\widetilde{q}; E) \\
 &\quad= G^{[1, \infty)}(1, \widetilde{q}; E) \,
   G^{[\widetilde{q}+ 1, \infty)}(\widetilde{q}+1, 2\widetilde{q};
    E) \\
    &\qquad\qquad\cdots \,
   G^{[(m-1)\widetilde{q}+1, \infty)}((m-1)\widetilde{q}+1, m\widetilde{q};
    E)\\
 &\quad= \left(G^{[1, \infty)}(1, \widetilde{q}; E)\right)^m~,
\end{split}\end{equation}
since
\[ G^{[(k-1)\widetilde{q}+1, \infty)}((k-1)\widetilde{q}+1, k\widetilde{q}; E)
    = G^{[1, \infty)}(1, \widetilde{q}; E) \]
by periodicity.

\vspace{2mm} An additional elementary connection, which is very
useful in proving our result, is:
\begin{equation*}
\gamma(E) =  -\lim_{n\to\infty}\frac{1}{n}\ln{|G(1,n;E)|},
\end{equation*}
and (\ref{eq:**}) implies that in the periodic case (period
$\widetilde{q}$)
\begin{equation}\label{grenn_vs_lyapunov}
\gamma(E) =  -\frac{1}{\widetilde{q}}\ln|G^{[1, \infty)}(1,
\widetilde{q}; E)|.
\end{equation}
Therefore, for $E\notin{\mathcal{A}}$,\ we have
\[
|G^{[1, \infty)}(1, \widetilde{q}; E)| = e^{-\gamma{\widetilde{q}}}.
\]

In order to estimate the Lyapunov exponent for the system
$(\ref{eigenvalue_eq})$  (and we will explain later how to get rid of
the dependence on $\theta$ for the operator (\ref{almostMathieu})), we move $E$\ slightly off the real
axis, estimate the Lyapunov exponent for the resulting complex
$E$'s\ and then finally relate it to the Lyapunov exponent on the
real line. For this last step, we use the following general lemma due to
Surace \cite{surace}:
\begin{lemma}[Surace]\label{surace_lemma}
For any $\epsilon,\eta > 0$,
\begin{equation*}
\operatorname{meas} \{ E :  |\gamma(E + i\epsilon) - \gamma(E)|\geq{\eta}\}
\leq\frac{\pi\epsilon}{\eta}~.
\end{equation*}
\end{lemma}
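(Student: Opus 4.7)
The plan is to deduce the lemma from the Thouless formula
\[ \gamma(E) = \int \ln|E - E'|\, dN(E'), \]
which expresses the Lyapunov exponent as the logarithmic potential of the density of states measure. In the periodic setting of the paper this is standard (it can be read off from the factorization of the trace polynomial $\operatorname{Tr} \Phi_{\widetilde{q}}(E)$), but the same formula, and hence the same argument, applies to general ergodic operators. Extended to complex energies by $\gamma(E + i\epsilon) = \tfrac{1}{2}\int \ln((E-E')^2 + \epsilon^2)\, dN(E')$, this immediately gives the non-negative representation
\[ \gamma(E + i\epsilon) - \gamma(E) = \frac{1}{2}\int \ln\!\left(1 + \frac{\epsilon^2}{(E-E')^2}\right) dN(E'). \]

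Next I would integrate in $E$ over $\mathbb{R}$ and swap the order of integration by Tonelli's theorem (the integrand is non-negative). The inner integral is a translate of
\[ \int_{\mathbb{R}} \ln\!\left(1 + \frac{\epsilon^2}{u^2}\right) du = \epsilon \int_{\mathbb{R}} \ln(1 + v^{-2})\, dv = 2\pi\epsilon, \]
where the last evaluation is a one-line integration by parts. Since $N$ is a probability measure, this yields the $L^1$ bound
\[ \int_{\mathbb{R}} \bigl|\gamma(E + i\epsilon) - \gamma(E)\bigr|\, dE = \pi\epsilon. \]
Chebyshev's inequality then gives $\operatorname{meas}\{E : |\gamma(E+i\epsilon) - \gamma(E)| \geq \eta\} \leq \pi\epsilon/\eta$, which is exactly the claim.

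The only nontrivial point is justifying the Thouless representation in the form needed and interchanging the two integrals; once those are in hand, everything reduces to the explicit Cauchy-type integral above and a one-line Markov inequality. The main ``obstacle'' is really just noting that the upper half-plane value $\gamma(E + i\epsilon)$ dominates $\gamma(E)$ pointwise, which is built into the identity $\ln((E-E')^2 + \epsilon^2) \geq \ln(E-E')^2$ and makes the absolute values in the statement harmless.
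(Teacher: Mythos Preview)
The paper does not give its own proof of this lemma; it simply quotes it as a result of Surace \cite{surace} and uses it as a black box. Your argument is correct and is essentially the standard proof (and, as far as one can tell, the one Surace gives): use the Thouless formula to write $\gamma(E+i\epsilon)-\gamma(E)$ as a nonnegative integral against the density of states, integrate in $E$ to get an $L^1$ bound of exactly $\pi\epsilon$, and apply Chebyshev. The computation $\int_{\mathbb{R}}\ln(1+\epsilon^2/u^2)\,du=2\pi\epsilon$ and the use of Tonelli are both fine, so there is nothing to add.
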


The reason for making $E$\ complex is to eliminate the possibility
of extremely small divisors in the Green function
$G(\cdot,\cdot;E)$.\ Thus, we begin with modifying the system
$(\ref{eigenvalue_eq})$,\ and from now, we study the system
\begin{equation}\label{eqcomplexsys}
\psi(n + 1) + \psi(n - 1) + V(n)\psi(n) = (E +
i\epsilon)\psi(n).
\end{equation}

For future reference, we recall two rough bounds:  for  every interval $I\subset\mathbb{Z}$,
\begin{equation}\label{eqboundforgreenfunctiononremainingintervals}
\|{G^I}\| = \|{(H^I - E -
i\epsilon)^{-1}}\|\leq{\frac{1}{\epsilon}},
\end{equation}
and, if $I \ni k_0$,
\begin{equation}\label{eq4}
\sum_{k\in I}|G^{I}(k_0, k; E+i\epsilon)|^2 = \frac{1}{\epsilon} \mathrm{Im\,}G^{I}(k_0,k_0;E + i\epsilon) \leq{\frac{1}{\epsilon^2}}.
\end{equation}

In the spectral analysis of periodic operators (of period
$\widetilde{q}$) a key role is played by the discriminant $D(E)$,
defined by:
\begin{equation}\label{discriminant}
D(E)\equiv \operatorname{Tr}(\Phi_{\widetilde{q}}(E)).
\end{equation}
$D(E)$\ is a polynomial in $E$\ of order $\widetilde{q}$\ with the
following properties (see, e.g., \cite{last2}):
\begin{itemize}
\item[$\openrm \mathrm{1}\closerm$] $D(E)$\ has $\widetilde{q}$\ real simple
zeroes.
\item[$\openrm \mathrm{2}\closerm$] $D(E) \geq 2$\ at all its maxima points, and
$D(E) \leq -2$\ at all its minima points.
\end{itemize}
\noindent The spectrum is precisely the set $\{ E: -2 \leq D(E) \leq
2\}$.\ Therefore, from the properties of $D(E)$\ it is clear that
the spectrum is the union of $\widetilde{q}$\ closed intervals
(bands), such that $D(E)$\ is strongly monotone on each band.

\medskip
\paragraph{\bf The periodic almost Mathieu operator}
Up until now we discussed general periodic operators. Now we concentrate on the almost Mathieu case.  

Consider $H_{\alpha, 2, \theta}$, where
$\alpha = \frac{p}{q}$,\ $p, q\in\mathbb{N}$, and we assume that $p
< q$\ are relatively prime. In this case the spectrum
$\sigma(\alpha, 2, \theta)$\ does depend on $\theta$, and we will
be interested in the two spectral sets:
\begin{equation}\label{union_intersection}
\begin{array}{ll}
S(\alpha, \lambda) \equiv \bigcup_{\theta} {\sigma(\alpha, \lambda,
\theta)}, & \textrm{\ } S_{-}(\alpha, \lambda) \equiv
\bigcap_{\theta} {\sigma(\alpha, \lambda, \theta)}.
\end{array}
\end{equation}
These sets are also well defined in
the case that $\alpha\in\mathbb{R\setminus{Q}}$, in which case we have $S(\alpha, \lambda) =
S_{-}(\alpha, \lambda) = \sigma(\alpha, \lambda, \theta)$. 

In the almost Mathieu case  the $\theta$\
dependence of $D_{p/q, \lambda, \theta}(E)$ is described by the following 
formula, due to Chambers \cite{chembers}.
\begin{prop}[Chambers]\label{prop1}
If $p$,\ $q$\ are relatively prime, then:
\begin{equation}\label{chamber_formula}
D_{p/q, \lambda, \theta}(E) = \Delta_{p/q, \lambda}(E) -
2\left(\frac{\lambda}{2}\right)^q\cos{\theta{q}},
\end{equation}
where $\Delta_{p/q, \lambda}(E)\equiv D_{p/q, \lambda,
\frac{\pi}{2q}}(E)$.\
\end{prop}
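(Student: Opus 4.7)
The plan is to analyze $D(E, \theta) = \operatorname{Tr}(\Phi_q(E))$ as a function of $\theta$ via three observations. First, since each transfer matrix $T_j$ depends on $\theta$ only through the entry $E - \lambda\cos(2\pi p j/q + \theta)$, expanding the $q$-fold product shows that $D(E,\cdot)$ is a trigonometric polynomial in $\theta$ of degree at most $q$. Second, shifting $\theta$ by $2\pi p/q$ is equivalent to shifting $n \to n+1$ in $V$, which cyclically permutes the factors of $\Phi_q = T_q \cdots T_1$ and hence preserves the trace; iterating and using $\gcd(p,q) = 1$ gives the full periodicity $D(E, \theta + 2\pi/q) = D(E, \theta)$.

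Third, and most delicate, $D$ is even in $\theta$. From $V(n;-\theta) = V(q-n;\theta)$ one obtains $T_n(-\theta) = T_{q-n}(\theta)$ (with indices mod $q$), so the product $T_q(-\theta) \cdots T_1(-\theta)$ is a cyclic rearrangement of the \emph{reversed} product $T_1(\theta) T_2(\theta) \cdots T_q(\theta)$. I would then verify the identity $\operatorname{Tr}(T_1 \cdots T_q) = \operatorname{Tr}(T_q \cdots T_1)$ via the conjugation $S T_j S^{-1} = T_j^T$ with $S = \operatorname{diag}(1,-1)$, which yields $(T_q \cdots T_1)^T = S(T_1 \cdots T_q)S^{-1}$, and therefore equal traces.

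Combining degree, periodicity, evenness, and reality of $D$ for real $\theta$, we obtain the form $D(E, \theta) = A(E) + B(E)\cos(q\theta)$. The coefficient $B(E)$ is determined by the leading $\lambda^q$ term of $D$, since only that term can produce $e^{\pm iq\theta}$; this leading term receives contributions only from the all-$(1,1)$-diagonal path in the trace expansion, giving $(-\lambda)^q \prod_{j=1}^q \cos(2\pi p j/q + \theta)$. A direct Fourier expansion of this product of cosines extracts the coefficient of $\cos(q\theta)$, yielding $-2(\lambda/2)^q$; the sign works out uniformly by a short parity check on $q + p(q+1)$ using $\gcd(p,q) = 1$. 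Finally, setting $\theta = \pi/(2q)$ kills the $\cos(q\theta)$ term and identifies $A(E) = \Delta_{p/q,\lambda}(E)$, producing the claimed formula. The main step demanding genuine care is the evenness of $D$, which is not immediate from cyclic invariance of the trace and requires the conjugation identity above.
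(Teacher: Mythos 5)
The paper does not prove Proposition~\ref{prop1}; it is quoted as a classical result and attributed to Chambers \cite{chembers}, so there is no ``paper proof'' to compare against. Your argument is a correct and essentially self-contained derivation, and it follows what is by now the standard route. The structural skeleton is sound: $D(E,\cdot)$ is a trigonometric polynomial of degree at most $q$ because each $T_j$ is affine in $\cos\theta,\sin\theta$; the $2\pi/q$-periodicity follows from $T_j(\theta+2\pi p/q)=T_{j+1}(\theta)$, cyclic invariance of the trace, and $\gcd(p,q)=1$; and for evenness the identity $V(n;-\theta)=V(q-n;\theta)$ reduces $\Phi_q(-\theta)$ to a cyclic rearrangement of the reversed product, so the only missing step is $\operatorname{Tr}(T_1\cdots T_q)=\operatorname{Tr}(T_q\cdots T_1)$, which your conjugation $ST_jS^{-1}=T_j^{\mathsf T}$ with $S=\operatorname{diag}(1,-1)$ handles cleanly. (Note that reality of $D$ is not actually needed: period plus degree give $A+B\cos q\theta+C\sin q\theta$, and evenness alone kills $C$.) These three facts force $D=A(E)+B(E)\cos q\theta$, and setting $\theta=\pi/(2q)$ identifies $A=\Delta_{p/q,\lambda}$.

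The only place where a reader must take something on faith is the computation of $B(E)$. Your reduction is right: the coefficient of $e^{\pm iq\theta}$ can only come from the term of maximal $\theta$-degree, i.e.\ the diagonal $(1,1)$-path that picks $-\lambda\cos(2\pi pj/q+\theta)$ at every step, giving $(-\lambda)^q\prod_{j=1}^q\cos(2\pi pj/q+\theta)$. Since $\gcd(p,q)=1$, this product equals $\prod_{k=0}^{q-1}\cos(\theta+2\pi k/q)$, whose $\cos(q\theta)$-coefficient is $(-1)^{q+1}2^{1-q}$; multiplying by $(-\lambda)^q$ gives $-2(\lambda/2)^q$ in all cases, in agreement with your ``uniform parity'' remark (indeed $q+p(q+1)$ is always odd when $\gcd(p,q)=1$). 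So the final sign does come out right regardless of the parities of $p$ and $q$. I would only suggest spelling out the Fourier computation of that coefficient if this were to appear as a full proof, since it is the one step that is genuinely computational rather than structural.
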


\noindent From this formula 
\[
S\left(\frac{p}{q}, \lambda\right)= \{ E : |\Delta_{\frac{p}{q}, \lambda}(E)|
\leq 2 + 2(\lambda/2)^q \}.
\]
\noindent Moreover, one can see from (\ref{chamber_formula}) that if
$\lambda > 2$\ then $S_{-}(\frac{p}{q}, \lambda) = \varnothing$,\ and
if $\lambda \leq 2$\ then
\[
S_{-}\left(\frac{p}{q}, \lambda\right) = \{ E: |\Delta_{p/q, \lambda}(E)| \leq
2 - 2(\lambda/2)^q\}.
\]

\noindent From the Chambers formula combined with the properties of $D(E)$
 stated above,  it follows that $\Delta_{\frac{p}{q},
\lambda}(E) \geq 2 + 2(\lambda/2)^q$\ at all its maxima points, and
$\Delta_{\frac{p}{q}, \lambda}(E) \leq -2 - 2(\lambda/2)^q$\ at all
its minima points. In addition, each of the sets $S(\frac{p}{q},
\lambda)$\ and $S_{-}(\frac{p}{q}, \lambda)$,\ when it's not empty,
is the union of $q$\ closed intervals (bands), such that
$\Delta_{p/q, \lambda}(E)$\ is strongly monotone on each band.
Also note that for $\lambda = 2$  (\ref{chamber_formula}) is consistent with (\ref{eqdefdist}).

An important feature of the set $\specU$ is its H\"older continuity
in $\alpha$, which allows to study the set $\specU$\ for an
irrational $\alpha$\  via $S(\alpha_i,
\lambda)$, for rational $\alpha_i \to \alpha$.
\begin{prop}[Avron--van Mouche--Simon \cite{simon_vanmouche_avron}]\label{continuity_of_spectra}
For every $\lambda > 0$, if
$|\alpha - \alpha^\prime|$ is sufficiently small, then for every $E\in\specU$,\
there is $E^\prime\in{S(\alpha^\prime,\lambda)}$\ with:
\[
|E - E^\prime| < 6(\lambda|\alpha - \alpha^\prime|)^{\frac{1}{2}}.
\]
\end{prop}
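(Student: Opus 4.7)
The plan is to apply Weyl's criterion together with a phase-shift trick that transfers approximate eigenvectors from $\alpha$ to $\alpha'$. Given $E \in S(\alpha, \lambda)$, there exists $\theta$ with $E \in \sigma(H_{\alpha, \lambda, \theta})$; for each large $L$, the goal is to produce a unit vector $\psi$ supported on some interval $I = [a, a+L-1]$ with $\|(H_{\alpha, \lambda, \theta} - E)\psi\| \leq C/L$.

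Once $\psi$ is in hand, set $\theta' := \theta + 2\pi a(\alpha - \alpha')$ so that the phases defining $V_{\alpha,\lambda,\theta}$ and $V_{\alpha',\lambda,\theta'}$ coincide at $n = a$. The Lipschitz estimate $|\cos x - \cos y| \leq |x - y|$ then yields, for $n \in I$,
\[
|V_{\alpha,\lambda,\theta}(n) - V_{\alpha',\lambda,\theta'}(n)| \leq 2\pi\lambda|n - a||\alpha - \alpha'| \leq 2\pi\lambda L|\alpha - \alpha'|,
\]
so that
\[
\|(H_{\alpha',\lambda,\theta'} - E)\psi\| \leq \frac{C}{L} + 2\pi\lambda L|\alpha - \alpha'|.
\]
Because $\|\psi\| = 1$, the spectral theorem then gives that $\dist(E, S(\alpha',\lambda)) \leq \dist(E, \sigma(H_{\alpha',\lambda,\theta'}))$ is bounded by the same quantity. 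Taking $L \sim (\lambda|\alpha - \alpha'|)^{-1/2}$ equalizes the two terms and yields a bound of order $\sqrt{\lambda|\alpha - \alpha'|}$; careful bookkeeping of constants delivers the prefactor $6$ in the statement.

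The main technical obstacle is producing $\psi$ with the optimal $O(1/L)$ residual. A natural candidate is $\psi := \chi\phi/\|\chi\phi\|$, where $\phi$ is a polynomially bounded (Schnol) generalized eigenfunction of $H_{\alpha,\lambda,\theta}$ at energy $E$ and $\chi$ is a triangular cutoff of width $L$ centered at $a + L/2$. Then $(H_{\alpha,\lambda,\theta} - E)\chi\phi = [H_{\alpha,\lambda,\theta}, \chi]\phi$ is a sum of finite differences of $\chi$ multiplied by shifted copies of $\phi$; since $\chi$ has differences of size $O(1/L)$ supported on an interval of length $L$, one gets $\|[H,\chi]\phi\| = O(L^{-1/2})$. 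Provided $\|\chi\phi\|_2 = \Omega(L^{1/2})$---which holds for generic windows from the average positivity of $|\phi|^2$ in the spectrum, equivalently from positivity of the density-of-states measure at $E$---the normalization then gives $\|(H_{\alpha,\lambda,\theta} - E)\psi\| = O(1/L)$, closing the argument.
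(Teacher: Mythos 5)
The paper does not prove this proposition---it is quoted directly from Avron--van Mouche--Simon \cite{simon_vanmouche_avron}---so there is no in-paper proof to compare against. Your overall strategy (align the phases via a shifted $\theta'$ so that the two potentials agree at the left edge of a window, build a normalized trial vector supported on a window of length $L$, balance the localization error $O(1/L)$ against the potential mismatch $O(\lambda L|\alpha-\alpha'|)$, and optimize $L$) is exactly the right mechanism and is essentially how the result is proved in the literature. The phase-shift computation and the Lipschitz bound are fine, and the choice $L\sim(\lambda|\alpha-\alpha'|)^{-1/2}$ correctly produces the square root.

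The gap is in the key technical claim that the truncated generalized eigenfunction satisfies $\|\chi\phi\|_2=\Omega(L^{1/2})$ ``from positivity of the density-of-states measure at $E$.'' This is not a valid justification. The integrated density of states is an average over $\theta$ (or, equivalently, over translates) and says nothing about the $\ell^2$-mass of a single fixed generalized eigenfunction $\phi$ over a fixed window; moreover, for the critical almost Mathieu operator the density of states is a singular continuous measure, so ``positivity at $E$'' is vacuous. There is also a secondary issue: Schnol's theorem furnishes a polynomially bounded generalized eigenfunction only for spectrally almost every $E$, so one must finish with a limiting/density argument over such $E$. The correct way to justify the window estimate is an averaging argument over the window centre rather than an appeal to the DOS. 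Concretely, if $\phi$ is polynomially bounded and not in $\ell^2$ (if it is in $\ell^2$, $E$ is an eigenvalue and the construction is immediate), set $\chi_a$ to be the width-$L$ triangular cutoff centred at $a$; then
\[
\sum_{|a|\leq N}\|[H,\chi_a]\phi\|^2 \;\lesssim\; \frac{1}{L}\sum_{|n|\leq N+L}|\phi(n)|^2,
\qquad
\sum_{|a|\leq N}\|\chi_a\phi\|^2 \;\gtrsim\; L\sum_{|n|\leq N-L}|\phi(n)|^2,
\]
and since $\sum_{|n|\leq N}|\phi(n)|^2\to\infty$ the two inner sums are comparable for $N$ large; hence there exists a centre $a$ for which the ratio $\|[H,\chi_a]\phi\|/\|\chi_a\phi\|$ is $O(1/L)$. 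This delivers the $O(1/L)$ residual you need without any assumption on $\|\chi_a\phi\|$ by itself, and is the statement your argument actually requires. With this replacement (and the closure argument over spectrally a.e.~$E$), the proof goes through.
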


We will also use the following result of Last and Wilkinson
\cite{last_wilkinson} regarding the special structure of the almost
Mathieu operator for $\lambda = 2$:
\begin{lemma}[Last--Wilkinson]\label{last_lemma}
If $p$,\ $q$\ are relatively prime, then:
\begin{equation}\label{last_wilk_estim_for_derivative}
\sum_{n = 1}^q{\frac{1}{|\Delta_{\frac{p}{q}, 2}^\prime(E_n)|}} =
\frac{1}{q},
\end{equation}
where $\Delta_{\frac{p}{q}, \lambda}^\prime(E) \equiv
\frac{d}{dE}\Delta_{\frac{p}{q}, \lambda}(E)$,\ and $E_1, E_2,
\ldots, E_q$\ are the $q$\ zeroes of $\Delta_{\frac{p}{q}, 2}(E)$.\
\end{lemma}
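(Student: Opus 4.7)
Since $\Delta:=\Delta_{p/q,2}$ is a monic polynomial of degree $q$ with $q$ simple real zeros $E_1<\cdots<E_q$ (by the properties of the discriminant recalled above), one has $\operatorname{sign}\Delta'(E_n)=(-1)^{q-n}$, so the stated identity is equivalent to
\[
\sum_{n=1}^q \frac{(-1)^{q-n}}{\Delta'(E_n)} \;=\; \frac{1}{q}.
\]
A standard Lagrange-interpolation argument (equivalently, the residue evaluation of $\frac{1}{2\pi i}\oint_{|z|=R}R(z)/\Delta(z)\,dz$ for $R$ with $\deg R\le q-1$, using that $R/\Delta = O(1/z)$ at infinity) identifies the left-hand side as the leading coefficient of the unique polynomial $R$ of degree $\le q-1$ satisfying $R(E_n)=(-1)^{q-n}$. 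The task is therefore to show that this leading coefficient is exactly $1/q$.

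To identify $R$ I use the Chambers formula. I parametrize the $q$ zeros of $D_{p/q,2,\theta}(E)=\Delta(E)-2\cos(q\theta)$ as smooth $\theta$-dependent curves $E_n(\theta)$ with $E_n(\pi/(2q))=E_n$ (via the implicit function theorem, valid because $\Delta'(E_n)\ne 0$). Differentiating $\Delta(E_n(\theta))=2\cos(q\theta)$ in $\theta$ yields
\[
E_n'(\theta) \;=\; -\frac{2q\sin(q\theta)}{\Delta'(E_n(\theta))}.
\]
Since the curves $\theta\mapsto E_n(\theta)$ remain confined to the $q$ disjoint bands of $S(p/q,2)$ and sweep them monotonically, the signs of $E_n'(\pi/(2q))$ alternate in $n$, matching $(-1)^{q-n+1}$. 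Hence the stated identity is equivalent to the ``total-speed'' assertion $\sum_{n=1}^q |E_n'(\pi/(2q))|=2$.

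I would establish this speed identity by exploiting the Aubry self-duality of the critical almost Mathieu operator: at $\lambda=2$, the $q\times q$ Bloch representation $H_k(\theta)$ of $H_{p/q,2,\theta}$ is unitarily equivalent, via the discrete Fourier transform on $\mathbb{Z}/q\mathbb{Z}$, to $H_\theta(k)$ with Bloch momentum $k$ and phase $\theta$ interchanged. At the symmetric point $k=\theta=\pi/(2q)$---where the $q$ eigenvalues of $H_k(\theta)$ are precisely the midband energies $E_1,\dots,E_q$---this duality produces a trace-type sum rule that is normalized by $\frac{1}{q}\operatorname{tr}(I_{\mathbb{C}^q})=1$, from which the factor $1/q$ emerges naturally.

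The main obstacle is converting this conceptual picture into the exact numerical equality: one must carefully track the normalizations of the eigenvectors of $H_k(\theta)$ across the $q$ midband energies and verify that the Fourier-duality intertwiner produces precisely the value $2$, rather than some other absolute constant. An alternative and possibly cleaner route is a direct residue computation on the hyperelliptic Riemann surface $w^2=\Delta(E)^2-4$ of genus $q-1$: summing residues of a judiciously chosen meromorphic $1$-form (say $dE/(\Delta(E)\,w)$) over an appropriate cycle should recover the identity, though verifying that the two-sheet cancellations conspire to give exactly $1/q$ is itself a nontrivial check that again relies on the self-duality specific to critical coupling.
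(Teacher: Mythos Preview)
The paper does not prove this lemma; it is quoted from Last--Wilkinson \cite{last_wilkinson} as a known input. So there is no ``paper's proof'' to compare against, and the question is simply whether your argument stands on its own.

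Your reductions are correct: the sign pattern $\operatorname{sign}\Delta'(E_n)=(-1)^{q-n}$ holds, and differentiating the Chambers relation at $\theta=\pi/(2q)$ gives $|E_n'(\pi/(2q))|=2q/|\Delta'(E_n)|$, so the lemma is indeed equivalent to $\sum_n |E_n'(\pi/(2q))|=2$. But the proof stops there. Everything after that point is a description of what one \emph{might} do---invoke Aubry duality at the symmetric Bloch point, or compute residues on the hyperelliptic curve $w^2=\Delta(E)^2-4$---together with an explicit acknowledgment that you have not checked the normalizations and do not know that the answer comes out to $2$ rather than some other constant. That is precisely the content of the lemma: the specific value $1/q$ (equivalently, the total speed $2$) is what makes the result nontrivial and what makes it fail for $\lambda\neq 2$. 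A proof that reduces the statement to ``the constant is $2$'' and then says the constant should be $2$ by duality, without carrying out the computation, has not supplied the key step.

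If you want to complete this along the lines you sketch, the cleanest route is the one actually used in \cite{last_wilkinson}: relate $\partial E_n/\partial\theta$ and $\partial E_n/\partial k$ (Bloch momentum) to matrix elements of the $q\times q$ Bloch Hamiltonian via first-order perturbation theory, and then use that at $\lambda=2$ Aubry duality makes the $\theta$- and $k$-derivatives symmetric. The sum over $n$ of $|\partial E_n/\partial k|$ is a trace that can be evaluated explicitly, and duality transfers this to the $\theta$-derivative sum you need. The point is that this computation has to be \emph{done}, not just described.
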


\medskip
\paragraph{\bf Hausdorff dimension}
Recall \cite{Falconer} that the Hausdorff dimension of a set
$S\subset\mathbb{R}$\ is
given by:
\begin{equation}\label{hausdorff_dimension}
\dimension(S) = \inf \left\{ t\in\mathbb{R^{+}} \, \big{|} \, 
\lim_{\delta\rightarrow 0}\inf_{\delta
\textrm{-covers}}\sum_{n}(\meas(U_n))^t < \infty \right\},
\end{equation}
where a $\delta$-cover $S\subset\bigcup_{n=1}^\infty{U_n}$ is
a cover  such that every $U_n$\ is an
interval of length smaller than $\delta$.\  The next lemma 
is used in the proof of Theorem
$\ref{mainTheorem}$.\

\begin{lemma}\label{hausdorff_dim_lemma}
Let $S\subset\mathbb{R}$, and let ${q_n}, {\widetilde{q}_n}$ be two sequences of natural numbers, such
that for every $n \in \N$ one has a cover of $S$ by intervals:
\[
S \subset  \left(\bigcup_{i=1}^{q_n}I_{i,n}\right)\bigcup
\left(\bigcup_{j=1}^{\widetilde{q}_n}{\widetilde{I}_{j,n}}\right),
\]
so that
\begin{enumerate}
\item $q_n \to \infty$, $\widetilde{q}_n \to \infty$;
\item $\meas(\bigcup_{i=1}^{q_{n}}I_{i,n}) < \frac{C_1}{q_{n}^{\beta_1}},
\textrm{\ } \meas(\bigcup_{j=1}^{\widetilde{q}_n}\widetilde{I}_{j,n}) <
\frac{C_2}{\widetilde{q}_{n}^{\beta_2}}$,
\end{enumerate}
where $C_1,C_2, \beta_1, \beta_2$ are positive constants. Then
\[
\dimension(S) \leq \max\left(\frac{1}{1 + \beta_1}, \frac{1}{1 +
\beta_2}\right).
\]
\end{lemma}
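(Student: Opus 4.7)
The plan is to bound the $s$-dimensional Hausdorff outer measure of $S$ for every $s$ exceeding $\max(1/(1+\beta_1), 1/(1+\beta_2))$, using the given sequence of covers together with the elementary fact that, for fixed total length, the sum $\sum |U|^s$ with $s \in (0,1)$ is maximized when the pieces have equal length. Because both $1/(1+\beta_i) \le 1$ under our hypotheses, we may assume throughout that $s \in (0,1)$.

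Fix $s > \max(1/(1+\beta_1), 1/(1+\beta_2))$ and consider the $n$-th cover. The concavity of $x \mapsto x^s$ on $[0,\infty)$ together with Jensen's inequality gives, for any finite collection of intervals $\{U_k\}_{k=1}^N$,
\[
\sum_{k=1}^{N} |U_k|^{s} \;\le\; N^{1-s}\Bigl(\sum_{k=1}^N |U_k|\Bigr)^{s}.
\]
Applying this to the two parts of the $n$-th cover and using hypothesis (2), we obtain
\[
\sum_{i=1}^{q_n} |I_{i,n}|^{s} \;\le\; q_n^{1-s}\Bigl(\frac{C_1}{q_n^{\beta_1}}\Bigr)^{s} \;=\; C_1^{s}\, q_n^{\,1-(1+\beta_1)s},
\]
and analogously $\sum_{j=1}^{\widetilde{q}_n} |\widetilde{I}_{j,n}|^{s} \le C_2^{s}\, \widetilde{q}_n^{\,1-(1+\beta_2)s}$. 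By the choice of $s$, both exponents $1-(1+\beta_i)s$ are strictly negative, so combined with (1) each of these sums tends to $0$ as $n \to \infty$, and hence so does their total.

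To use this to bound the Hausdorff measure we must also check that the mesh of our covers shrinks. Every single interval in the $n$-th cover has length at most $\max(C_1 q_n^{-\beta_1}, C_2 \widetilde{q}_n^{-\beta_2})$, which tends to $0$ by (1). Consequently, for every $\delta > 0$, the cover for sufficiently large $n$ is a $\delta$-cover of $S$, and the infimum in \eqref{hausdorff_dimension} over $\delta$-covers is bounded by the sum computed above. Letting $n \to \infty$ yields $\mathcal{H}^s(S) = 0$, so $\dimension(S) \le s$; letting $s$ decrease to $\max(1/(1+\beta_1), 1/(1+\beta_2))$ completes the proof.

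There is no substantive obstacle here: the proof is a direct application of the concavity estimate to the prescribed covers, with the only point requiring minor care being the verification that the individual interval lengths (and not merely the total lengths) go to zero, which is immediate from hypothesis (2).
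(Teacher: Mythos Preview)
Your proof is correct and follows essentially the same approach as the paper: both arguments apply Jensen's inequality (concavity of $x\mapsto x^s$) separately to the two families of intervals to obtain the bound $\sum_i |I_{i,n}|^s \le q_n^{1-s}(C_1 q_n^{-\beta_1})^s$, and likewise for the second family. The only cosmetic differences are that the paper works at the borderline value $t=\max(1/(1+\beta_1),1/(1+\beta_2))$ and shows the sum stays bounded, whereas you take $s>t$ and show the sum tends to zero; and you make explicit the verification that the mesh of the covers shrinks, which the paper leaves implicit.
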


\begin{proof} Let $t = \max({\frac{1}{1 + \beta_1}, \frac{1}{1 + \beta_2}})$,\ then
by Jensen's inequality:
\begin{equation*}
\frac{1}{q_{n}}\sum_{i = 1}^{q_{n}}{(\meas(I_{i,n}))^t} \leq
\left(\frac{1}{q_{n}}\sum_{i = 1}^{q_{n}}{\meas(I_{i,n}})\right)^t,
\end{equation*}
\begin{equation*}
\frac{1}{\widetilde{q}_{n}}\sum_{j = 1}^{\widetilde{q}_{n}}{(\meas(\widetilde{I}_{j,n}))^t}\leq
\left(\frac{1}{\widetilde{q}_{n}}\sum_{j = 1}^{\widetilde{q}_{n}}{\meas(\widetilde{I}_{j,n}})\right)^t,
\end{equation*}
which implies:
\begin{equation}\label{estim_covers_haus_dim}
\begin{split}
&\sum_{i = 1}^{q_{n}}{(\meas(I_{i,n}))^t}
    + \sum_{j =1}^{\widetilde{q}_{n}}{(\meas(\widetilde{I}_{j,n}))^t}  \\
&\qquad\leq (\max(C_1, C_2))^t(q_{n}^{1 - t(1 + \beta_1)} + \widetilde{q}_{n}^{1 - t(1 + \beta_2)}) \leq C_t~, 
\end{split}
\end{equation}
where $C_t = (\max(C_1, C_2))^t$.
Since $q_{n}, \widetilde{q}_n\rightarrow\infty$ as $n\rightarrow\infty$ we
obtain from $(\ref{estim_covers_haus_dim})$:\
\[\dimension(S) \leq
\max\left(\frac{1}{1 + \beta_1}, \frac{1}{1 + \beta_2}\right).~\]
\end{proof}

\section{Proof of Theorem $\ref{matrixTheorem}$}\label{sec:matrixTheorem}
We focus on the first inequality, the second one is proved by a similar argument.
Define 
\[ \psi(1) = \phi_1^+ \quad \text{and} \quad {\psi}(n) = \Phi_{n-1} {\psi} (1)~, 
\quad 2 \leq n \leq N+1~,\]
and represent
${\psi} (n)$ as a linear combination:
\[\psi(n) = A_n \phi^+_n + B_n \phi^-_n~, \] 
where we set $T_{N+1} = T_N$.
Then $\binom{A_1}{B_1} = \binom{1}{0}$.  From the definition of $T_n$, we have
\begin{equation}\label{nextstepwithT_i}
\psi(n+1) = T_n\openrm
{A_n\phi^{+}_n +
B_n\phi^{-}_n}\closerm =
A_n{e^{\gamma_{n} + i\zeta_{n}}}\phi^{+}_n +
B_n{e^{-\gamma_{n} - i\zeta_{n}}}\phi^{-}_n~.
\end{equation}
Comparing (\ref{nextstepwithT_i}) with
\begin{equation}\label{nextstep}
\psi(n+1) =
A_{n+1}\phi^{+}_{n+1} +
B_{n+1}\phi^{-}_{n+1},
\end{equation}
and denoting
\[ U_n \equiv
\left(\begin{array}{cc}
\phi_n^+(1) & \phi_n^-(1) \\
\phi_n^+(2) & \phi_n^-(2) \\
\end{array}\right)~, \quad \Lambda_n \equiv \left(\begin{array}{cc}
e^{\gamma_n + i\zeta_n} & 0 \\
  0      &   e^{-\gamma_n - i\zeta_n}  \\
\end{array}\right)~, \]
we get
\[ U_{n+1} \binom{A_{n+1}}{B_{n+1}} =  U_n \Lambda_n \binom{A_n}{B_n}~,\]
and
\[ \binom{A_{n+1}}{B_{n+1}} = \left[ U_{n+1}^{-1} (U_n - U_{n+1}) + I \right] \Lambda_n \binom{A_n}{B_n}~.\]
The matrix $\rho_n = U_{n+1}^{-1} (U_n - U_{n+1})$ satisfies:
\[\begin{split}\|\rho_n\| &\leq \frac{\|U_{n+1}\|}{|\det U_{n+1}|} \| U_n - U_{n+1} \| \\
&\leq \frac{2 \max{\openrm \| \phi_{n}^{+} - \phi_{n+1}^{+}\|, \|
\phi_{n}^{-} - \phi_{n+1}^{-}\|\closerm}}{|\det U_{n+1}|} \leq \frac{\beta}{2}\, \gamma_n{e^{-\gamma_n}}
\end{split}\]
where the second inequality follows from the fact that the norm of every
row of $U_{n+1}$ is equal to one, and the third one,-- from assumption (3) of the
theorem.

We prove by induction that for every $n$, $|B_n| \leq \beta |A_n|$. From
the definition $|B_1| = 0 < \beta = \beta |A_1|$. Assume that $|B_n| \leq
\beta |A_n|$ for some $n > 0$. From the relations above we obtain
\begin{align}\label{induction}\begin{split}
|A_{n+1}| &\geq e^{\gamma_n}|A_n| -
e^{\gamma_n}\gamma_n{e^{-\gamma_n}}{\frac{\beta}{2}}|A_n| -
e^{-\gamma_n}\gamma_n{e^{-\gamma_n}}{\frac{\beta}{2}}|B_n| \\
&\geq |A_n| \left( e^{\gamma_n} - \frac{\beta}{2} \gamma_n - \frac{\beta^2}{2} \gamma_n e^{-2 \gamma_n} \right)~,\end{split}\\\label{indbis}\begin{split}
|B_{n+1}|&\leq e^{-\gamma_n}|B_n| +
e^{-\gamma_n}\gamma_n{e^{-\gamma_n}}{\frac{\beta}{2}}|B_n| +
e^{\gamma_n}\gamma_n{e^{-\gamma_n}}{\frac{\beta}{2}}|A_n| \\
&\leq |A_n| \left( \beta e^{-\gamma_n} + \frac{\beta^2}{2} \gamma_n
e^{-2 \gamma_n} + \frac{\beta}{2} \gamma_n \right)~.
\end{split}
\end{align}
An elementary computation using that $e^x \geq 1+x$ for any $x \in \mathbb{R}$ yields
\[ \beta e^{-\gamma_n} + \frac{\beta^2 \gamma_n}{2} e^{-2 \gamma_n} +
\frac{\beta \gamma_n}{2} \leq \beta \left\{ e^{\gamma_n} - \frac{\beta \gamma_n}{2} -
\frac{\beta^2 \gamma_n}{2} e^{-2 \gamma_n}\right\}~, \] 
hence $|B_{n + 1}| \leq \beta  |A_{n + 1}|$, concluding the
induction.

\vspace{2mm}\noindent By iterative application of $\openrm
\ref{induction}\closerm$ and the elementary inequality 
\[ \ln(1 - x e^{-x}) \geq -x~, \quad x \geq 0~, \]
we obtain
\begin{equation}\label{eq:4}
\begin{split}|A_{N+1}|&\geq e^{\sum_{n = 1}^N{\gamma_n}}\prod_{n = 1}^N\left(1 -
\frac{\beta}{2}\gamma_n(\beta e^{-3\gamma_n} + e^{-\gamma_n})\right) \\&\geq
e^{\sum_{n=1}^N \gamma_n} \prod_{n=1}^N \left[ 1 - \beta \gamma_n
e^{- \beta \gamma_n} \right]\geq e^{(1 - \beta){\sum_{n=1}^N {\gamma_n}}}~.
\end{split}\end{equation}
The proof of the theorem is concluded by
the estimate
\[ \| \Phi_N \phi_1^+ \| 
    \geq |A_{N+1}| \|\phi_{N+1}^+\| - |B_{N+1}| \|\phi_{N+1}^-\| \geq |A_{N+1}|(1 - \beta)~.\]
\qed

\section{Proof of Theorem $\ref{lebesgueMeasure}$}\label{sec:lebesgueMeasure}
\subsection{Outline of the proof}
Fix $q \in \N$ and consider the almost Mathieu operator
$H_{p/q,2,\theta}$, where $p < q$ are relatively prime. Let $\delta
> 0$, and assume that $\widetilde{p} < \widetilde{q}$ are also
relatively prime, and that $\left|
\frac{\widetilde{p}}{\widetilde{q}} - \frac{p}{q} \right|$ is
sufficiently small. Consider the almost Mathieu operator
$H_{\widetilde{p}/\widetilde{q}, 2, \theta}$. Let $\Delta_{p/q, 2}$
be as defined in Chambers' formula (\ref{chamber_formula}), and
$S(\frac{\widetilde{p}}{\widetilde{q}}, 2)$ as defined in
Section~\ref{s:am.prel}. Consider
\[ J_\delta = J_\delta^{(1)} = \left\{ E \, | \, \left| \Delta_{p/q, 2}(E) \right| > \delta > 0 \right\}~. \]

The Lyapunov exponent for the operator
$H_{\widetilde{p}/\widetilde{q}, 2, \theta}$ depends on $\theta$.
However, Chambers' formula allows us to relate the set
$S(\frac{\widetilde{p}}{\widetilde{q}}, 2)$ to the Lyapunov exponent
corresponding to $\theta = 0$ (which we denote $\gamma(E)$), as follows:

\begin{lemma}
For every $E \in S(\frac{\widetilde{p}}{\widetilde{q}}, 2)$,
$\gamma(E) \leq \frac{\ln 6}{\widetilde{q}}$.
\end{lemma}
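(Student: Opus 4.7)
The plan is to reduce the bound on $\gamma(E)$ to a bound on the spectral radius of the one-period transfer matrix $\Phi_{\widetilde{q}}(E)$ at $\theta=0$, via the identity
\[ \gamma(E) = \frac{1}{\widetilde{q}} \ln \mathrm{Spr\,}(\Phi_{\widetilde{q}}(E)) \]
recalled in Section~\ref{s:am.prel}. Since $\det \Phi_{\widetilde{q}}(E) = 1$, the eigenvalues $\mu_\pm$ satisfy $\mu_+ \mu_- = 1$ and $\mu_+ + \mu_- = D_{\widetilde{p}/\widetilde{q}, 2, 0}(E)$. When $|D| > 2$ the eigenvalues are real with a common sign, so $|\mu_+| + |\mu_-| = |D|$, and hence $\mathrm{Spr\,}(\Phi_{\widetilde{q}}(E)) = |\mu_+| \leq |D|$; when $|D| \leq 2$, the eigenvalues lie on the unit circle and $\mathrm{Spr\,}(\Phi_{\widetilde{q}}(E)) = 1$. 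In all cases,
\[ \mathrm{Spr\,}(\Phi_{\widetilde{q}}(E)) \leq \max(1, |D_{\widetilde{p}/\widetilde{q}, 2, 0}(E)|)~. \]

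Next, I would apply Chambers' formula (Proposition~\ref{prop1}) at $\theta = 0$ and $\lambda = 2$:
\[ D_{\widetilde{p}/\widetilde{q}, 2, 0}(E) = \Delta_{\widetilde{p}/\widetilde{q}, 2}(E) - 2~. \]
By the characterization recalled just after the proposition, for $\lambda = 2$ we have $S(\widetilde{p}/\widetilde{q}, 2) = \{E : |\Delta_{\widetilde{p}/\widetilde{q}, 2}(E)| \leq 4\}$. Thus for every $E \in S(\widetilde{p}/\widetilde{q}, 2)$,
\[ |D_{\widetilde{p}/\widetilde{q}, 2, 0}(E)| \leq |\Delta_{\widetilde{p}/\widetilde{q}, 2}(E)| + 2 \leq 6~. \]

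Combining, $\mathrm{Spr\,}(\Phi_{\widetilde{q}}(E)) \leq 6$, which gives $\gamma(E) \leq \frac{\ln 6}{\widetilde{q}}$ as claimed. The argument is essentially a one-line computation once the two ingredients are in hand, so there is no real obstacle; the only thing to be careful about is the choice $\theta = 0$ (rather than, say, $\theta = \pi/(2\widetilde{q})$ which would make the cosine vanish), since $\gamma(E)$ in the statement is specifically the Lyapunov exponent at $\theta = 0$, and it is this choice that yields the constant $6 = 4 + 2$ through Chambers' formula.
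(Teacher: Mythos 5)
Your proposal is correct and follows essentially the same route as the paper: both use Chambers' formula at $\theta=0$ together with the characterization $S(\widetilde{p}/\widetilde{q},2) = \{E : |\Delta_{\widetilde{p}/\widetilde{q},2}(E)| \leq 4\}$ to get $|D_{\widetilde{p}/\widetilde{q},2,0}(E)| \leq 6$, and then bound $\gamma(E)$ via the trace/spectral-radius of the one-period transfer matrix (the paper phrases the final step as $6 \geq |D| = e^{\gamma\widetilde{q}} + e^{-\gamma\widetilde{q}} \geq e^{\gamma\widetilde{q}}$, which is the same computation you make explicit through the case split on $|D|\lessgtr 2$).
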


\begin{proof}
For every $E \in S(\frac{\widetilde{p}}{\widetilde{q}}, 2)$, Proposition~\ref{prop1} yields $\left| \Delta_{\widetilde{p}/\widetilde{q},
2}(E) \right| \leq 4$, 
and hence $\left|
D_{\widetilde{p}/\widetilde{q}, 2, 0}(E) \right| \leq 6$. Thus for
 $E$ such that $\gamma(E) \neq 0$ we have
\[ 6 \geq \left| D_{\widetilde{p}/\widetilde{q}, 2, 0}(E) \right|
    = e^{\gamma(E)\widetilde{q}} + e^{-\gamma(E)\widetilde{q}} \geq e^{\gamma(E)\widetilde{q}}~.\]
\end{proof}

Therefore, from now on we work with $\theta = 0$.
In order to estimate the Lyapunov exponent for the system
$(\ref{eigenvalue_eq})$,\ where $V(n) =
2\cos(2\pi\frac{\widetilde{p}}{\widetilde{q}}n)$, we move $E$\
slightly off the real axis, estimate the Lyapunov exponent for the
resulting complex $E$'s and then finally relate it to the Lyapunov
exponent on the real line. The reason for making $E$\ complex is to
eliminate the possibility of extremely small divisors in the Green
function $G(\cdot, \cdot;E)$,\ corresponding to the operator
$H_{\widetilde{p}/\widetilde{q}, 2, 0}$. Thus, we begin with
modifying the system $(\ref{eigenvalue_eq})$, and from now on, we
will study the system
\begin{equation}\label{eqcomplexsys'}
\psi(n + 1) + \psi(n - 1) + V(n)\psi(n) = (E + i\epsilon)\psi(n).
\end{equation}
The main purpose in our proof is to show that for $\epsilon \cong
e^{-\frac{c\delta\widetilde{q}}{q}}$,\
where $c$ is a sufficiently small constant, the Lyapunov exponent $\gamma(E + i\epsilon)$\ of the
modified system is at least of order $c \delta/q$.\ As
follows from Lemma~\ref{surace_lemma}, the Lyapunov exponent
$\gamma(E)$\ of the system $(\ref{eigenvalue_eq})$\ is close to
the Lyapunov exponent $\gamma(E + i\epsilon)$\ of the system
$(\ref{eqcomplexsys})$,\ except for values of $E$\ which lie in a
set of Lebesgue measure $\cong \epsilon$.\ Therefore, we
just need to show that, for $\epsilon \cong
e^{-\frac{c \delta\widetilde{q}}{q}}$,
the Lyapunov exponent of the system $(\ref{eqcomplexsys})$\ is
larger than or equal to $c \delta/(2q)$.\

Following is a brief explanation of why the Lyapunov exponent of the
modified system is large, skipping technical details. The technical
proof is given later in this Section.

Fix $E\in J_{\delta}$,\ for which $\Delta_{p/q,2}(E) < 0$,\ where
$\Delta_{p/q,2}(E)$\ is defined in $(\ref{chamber_formula})$.\ We
prove that the Green function $G^{[1, \infty)}(\cdot, \cdot, E +
i\epsilon)$\ decays exponentially fast for all such $E$'s.\ The
proof for the case $\Delta_{p/q,2}(E) > 0$\ is very similar (one
should work with $\theta = \frac{\pi}{q}$ rather than $\theta = 0$),
hence we prove that the Green function $G^{[1, \infty)}(\cdot,
\cdot, E + i\epsilon)$,\ that corresponds to $H^{[1,
\infty)}_{\widetilde{p}/\widetilde{q}, 2, \theta}$, decays
exponentially fast for all $E\in J_{\delta}$.\

In order to prove that, we introduce an {\em intermediate problem}, which
 interpolates between the periodic
potential corresponding to $p/q$ and that corresponding to $\widetilde{p}/\widetilde{q}$:
\[ (\widetilde{H}\psi)(n) = \psi(n + 1) + \psi(n - 1) + \widetilde{V}(n)\psi(n) ~, \]
where 
\begin{equation}\label{periodic_approximation}
\widetilde{V}(n) = \left\{ \begin{array}{ll}
2\cos(2\pi\frac{p}{q}n + \theta_n) & \textrm{\ if\ } n < l_0q\\
2\cos(2\pi\frac{p}{q}n + \theta_{l_0q}) & \textrm{\ if\ } n \geq
l_0q,
\end{array}\right.
\end{equation}
$0 < l_0 \leq \widetilde{q}$\ is an integer chosen by $l_0 =
[\frac{\widetilde{c}\widetilde{q}\sqrt{\delta}}{q}]$,\ where $\widetilde{c}$\ is
some universal constant, and
\[ \theta_n = \theta + 2\pi\left(\frac{\widetilde{p}}{\widetilde{q}} - \frac{p}{q}\right)n~. \]

Our choice of $l_0$\ will ensure that for all
$\widetilde{\theta}\in[0, \frac{2\pi{l_0}}{\widetilde{q}}]$, $E$ is sufficiently
far from the spectrum of $H_{\frac{p}{q}, 2,
\widetilde{\theta}}$.

On a large initial interval $\widetilde{V}$ coincides with
$V_{\frac{\widetilde{p}}{\widetilde{q}}, 2, 0}$, and afterwards coincides with
$V_{\frac{p}{q}, 2, \theta_{l_0q}}$. If the initial interval is large enough, the intermediate
operator is close to $H_{\frac{\widetilde{p}}{\widetilde{q}}, 2, 0}$.
On the other hand, if we regard the potential on the initial interval as $V_{\frac{p}{q}, 2, \theta}$
with slowly varying $\theta$, we can connect the spectrum of the intermediate problem
to the intersection of the spectra of $H_{\frac{p}{q}, 2, \theta}$, which is exactly
$S_-(\frac{p}{q}, 2)$.

In the language of Green functions, this works as follows. First, we show that the Green function
corresponding to the intermediate problem is close to that corresponding to
$H_{\frac{\widetilde{p}}{\widetilde{q}}, 2, 0}$ (since the two potentials coincide on a long interval).
The intermediate problem is eventually periodic of period $q$ (with constant $\theta$), and this allows us to
estimate its Green function for energies which are sufficiently far from $S_-(\frac{p}{q}, 2)$.
From these two steps we obtain an estimate on the Green function corresponding to
$H_{\frac{\widetilde{p}}{\widetilde{q}}, 2, 0}$ (and eventually $ H_{\frac{\widetilde{p}}{\widetilde{q}}, 2, \theta} $).

\subsection{Proof of Theorem~\ref{lebesgueMeasure}}

We start with the case $J_\delta = J_\delta^{(1)}$; also, we
assume that $0 < \delta \leq 1$ (since the spectrum is bounded, the general case follows by adjusting the numerical
constants).
Fix $E\in J_{\delta}$. We
prove that the Green function $G^{[1, \infty)}(\cdot, \cdot, E +
i\epsilon)$\ corresponding to the almost Mathieu operator $H^{[1,
\infty)}_{\widetilde{p}/\widetilde{q}, 2, 0}$, decays exponentially
fast for all such $E$'s.\ We focus on the case
 $\Delta_{p/q,2}(E) < 0$; the case $\Delta_{p/q,2}(E)
> 0$\ is very similar.

Consider the  operator $\widetilde{H} = \Delta + \widetilde{V}$, where,
as in (\ref{periodic_approximation}),
\begin{equation}\label{middleProblem}\widetilde{V}(n) = \left\{
\begin{array}{ll}
2\cos(2\pi{\frac{p}{q}}n + \theta_n) & \textrm{ for $n < l_0q$}\\
2\cos(2\pi{\frac{p}{q}}n + \theta_{l_0q}) & \textrm{ for $n \geq l_0q$},
\end{array}\right.
\end{equation}
 $0 < l_0 \leq \widetilde{q}$\ is an integer that will be
determined later, and
\[
\theta_n = 2\pi\left(\frac{\widetilde{p}}{\widetilde{q}} -
\frac{p}{q}\right)n~.
\]
We will verify
that our choice of $l_0$ guarantees that for all $\widetilde{\theta}\in[0,
\frac{2\pi{l_0}}{\widetilde{q}}]$, the energy $E$ is sufficiently far
from the spectrum of $H_{\frac{p}{q}, 2, \widetilde{\theta}}$.

Let $\widetilde{G}^{[\cdot, \infty)}(\cdot, \cdot; E + i\epsilon)$\
be the Green function corresponding to the operator $\widetilde{H}$.\
For convenience, we denote from
now on $G^{[\cdot, \cdot)}(\cdot, \cdot; E + i\epsilon) \equiv
G^{[\cdot, \cdot)}(\cdot, \cdot)$.\ We want to obtain an estimate on
$|G^{[1,\infty)}(1, q\widetilde{q})|$.\ Assume for a moment that the
following holds:
\begin{equation}\label{final_estim}
|G^{[1,\infty)}(1, q\widetilde{q})| \leq
\frac{20}{\epsilon^4} e^{-\frac{c'\delta\widetilde{q}}{q}}~,
\end{equation}
where $c'>0$ is a numerical constant. For $\epsilon = 20^{1/4} e^{-\frac{c'\delta\widetilde{q}}{8 q}}$
we obtain 
\[ |G^{[1,\infty)}(1, q\widetilde{q})| \leq
e^{-\frac{c'\delta\widetilde{q}}{2q}}~,
\]
whence, using $(\ref{grenn_vs_lyapunov})$, 
$\gamma(E + i\epsilon)\geq \frac{c'\delta}{2q}$.    From
Lemma $\ref{surace_lemma}$\ we have
\begin{equation*}
\begin{split}
\meas\,\left(\left\{E : |\gamma(E + i\epsilon) - \gamma(E)|\geq
\frac{c'\delta}{4q}\right\}\right) \leq
\frac{4\pi q \epsilon}{c'\delta} \leq \frac{C q}{\delta} e^{-\frac{c'\delta\widetilde{q}}{8 q}}
\end{split}
\end{equation*}
therefore, if $\widetilde{q} \geq 4q/(C\delta)$,
\begin{equation*}
\begin{split}
&\meas\,\left(\left\{E : \gamma(E)\leq
\frac{\ln6}{\widetilde{q}}\right\}\cap{J_{\delta}}\right) \leq{\meas\,\left(\left\{E :
\gamma(E)\leq{\frac{c'\delta}{4q}}\right\}\cap{J_{\delta}}\right)} \\
&\leq{\meas\,\left(\left\{E : |\gamma(E + i\epsilon) -
\gamma(E)|\geq{\frac{c'\delta}{4q}}\right\}\cap{J_{\delta}}\right)}\leq
\frac{C q}{\delta} e^{-\frac{c'\delta\widetilde{q}}{8 q}}~,\end{split}
\end{equation*}
and the proof of the theorem is concluded. 

\noindent We prove $(\ref{final_estim})$\ in two steps:
\begin{enumerate}
\item[(i)] for any $l_0$, $|G^{[1,\infty)}(1, q\widetilde{q})| \leq
\frac{5}{{\epsilon}^3}|\widetilde{G}^{[1, \infty)}(1, l_0{q})|$,
\item[(ii)]for  $l_0 =
[\frac{c'\sqrt{\delta}\widetilde{q}}{q}]$, one has $|\widetilde{G}^{[1, \infty)}(1, l_0{q})|\leq
\frac{4}{\epsilon}e^{-\frac{c'\delta\widetilde{q}}{q}}$.
\end{enumerate}

\vspace{4mm}\noindent Proof of (i): From
(\ref{eqResolventIdentity}) and the definition of
$\widetilde{V}$\ we obtain
\begin{equation*}
\begin{split}
&\Big|G^{[1,\infty)}(1, q\widetilde{q})\Big| \\
&\quad= \Big|G^{[1,\infty)}(1,
l_{0}q)G^{[l_{0}q + 1, \infty)}(l_{0}q + 1, q\widetilde{q})\Big| \\
&\quad\leq \Big|\widetilde{G}^{[1, \infty)}(1, l_{0}q)G^{[l_{0}q +
1,
\infty)}(l_0q + 1, q\widetilde{q})\Big| \\
&\qquad+ \Big|\left(G^{[1,\infty)}(1, l_{0}q) - \widetilde{G}^{[1,
\infty)}(1, l_{0}q)\right)\Big|\Big|G^{[l_{0}q + 1, \infty)}(l_{0}q
+ 1,
q\widetilde{q})\Big| \\
&\quad= \Big|G^{[l_{0}q + 1,
\infty)}(l_0q + 1, q\widetilde{q})\Big| \\
&\qquad\times \left[ \Big| \widetilde{G}^{[1, \infty)}(1, l_{0}q)
\Big|
    + \Big|\left(G^{[1,\infty)}(1, l_{0}q) - \widetilde{G}^{[1,
\infty)}(1, l_{0}q)\right)\Big| \right]~.
\end{split}
\end{equation*}
First, we estimate the second term of the sum. From the second resolvent
identity we obtain
\begin{equation}\label{difference_estim}
\begin{split}
&\Big|G^{[1,\infty)}(1, l_{0}q) - \widetilde{G}^{[1, \infty)}(1, l_{0}q)\Big| \\
&\quad= \Big|\sum_{k=1}^\infty{\widetilde{G}^{[1, \infty)}(1,
k)(\widetilde{V}(k) - V(k))G^{[1,\infty)}(k, l_{0}q)}\Big|\\
&\quad\leq \sum_{k=l_{0}q + 1}^\infty{\Big|\widetilde{G}^{[1,
\infty)}(1,
k)(\widetilde{V}(k) - V(k))G^{[1,\infty)}(k, l_{0}q)\Big|}\\
&\quad\leq 2\|V\|_{\infty}\sum_{k=l_{0}q +
1}^\infty{\Big|\widetilde{G}^{[1, \infty)}(1, k)G^{[1,\infty)}(k,
l_{0}q)\Big|}\\
&\quad\leq 4\sum_{k=l_{0}q + 1}^\infty{\Big|\widetilde{G}^{[1,
\infty)}(1, k)G^{[1,\infty)}(k, l_{0}q)\Big|}\equiv \widetilde{W}.
\end{split}
\end{equation}
\noindent Applying $(\ref{eqResolventIdentity})$  we see that for
$k\geq l_{0}q + 1$
\[
\widetilde{G}^{[1, \infty)}(1, k) = \widetilde{G}^{[1, \infty)}(1,
l_{0}q)\widetilde{G}^{[l_{0}q + 1, \infty)}(l_{0}q + 1, k).
\]
\noindent As a result we obtain
\[
\widetilde{W} \leq 4|\widetilde{G}^{[1, \infty)}(1,
l_{0}q)|\widetilde{A} \leq 4|\widetilde{G}^{[1, \infty)}(1,
l_{0}q)|\widetilde{B},
\]
where
\begin{align}\label{estim_sum}
\widetilde{A} &=& \sum_{k=l_{0}q + 1}^\infty{\Big|\widetilde{G}^{[l_{0}q +
1, \infty)}(l_{0}q + 1, k)G^{[1,\infty)}(k, l_{0}q)\Big|},\\
\label{estim_cauchy_schwartz}
\widetilde{B} &=& \left(\sum_{k=1}^{\infty}\Big|\widetilde{G}^{[1,
\infty)}(1, k)\Big|^2
\right)^{\frac{1}{2}}\left(\sum_{k=1}^{\infty}\Big|G^{[1, \infty)}(k,
l_{0}q)\Big|^2 \right)^{\frac{1}{2}},
\end{align}
and we used the
Cauchy-Schwartz inequality and the definition of $\widetilde{V}$.
From $(\ref{eq4})$\ we have
\begin{equation}\label{estim_sum_from_cauchy_shwartz_for_G}
\sum_{k=1}^{\infty}\Big|\widetilde{G}^{[1, \infty)}(1, k)\Big|^2 \leq
\frac{1}{\epsilon^2}~, \quad
\sum_{k=1}^{\infty}\Big|G^{[1, \infty)}(k, l_0q)\Big|^2 \leq
\frac{1}{\epsilon^2}~,\end{equation}
therefore, we obtain
\begin{equation}\label{final_difference_estim}
\Big|G^{[1,\infty)}(1, l_{0}q) - \widetilde{G}^{[1, \infty)}(1, l_{0}q)\Big|
\leq \Big|\widetilde{G}^{[1, \infty)}(1, l_{0}q)\Big|\frac{4}{\epsilon^2}.
\end{equation}
By combining $(\ref{final_difference_estim})$\ and $(\ref{eq4})$\ we
obtain
\begin{equation}\label{fde2}\begin{split}
&\Big|G^{[1,\infty)}(1, q\widetilde{q})\Big| \\
&\leq \Big|\widetilde{G}^{[1,
\infty)}(1, l_{0}q)\Big|\left(1 + \frac{4}{\epsilon^2}\right)\Big|G^{[l_{0}q
+ 1, \infty)}(l_0q + 1, q\widetilde{q})\Big| \\
&\leq \Big|\widetilde{G}^{[1, \infty)}(1,
l_{0}q)\Big|\frac{5}{\epsilon^3},
\end{split}\end{equation}
where the last inequality follows from the assumption that $\epsilon
< 1$.

\vspace{3mm}\noindent Proof of (ii). Now we estimate the
$|\widetilde{G}^{[1, \infty)}(1, l_{0}q)|$.\ We first verify that
for $l_0 =
[\frac{\widetilde{c}\sqrt{\delta}\widetilde{q}}{q}]$, where
${\widetilde{c}}> 0$, we have 
\begin{equation}\label{eq:iii}
\text{for all $0 \leq j \leq l_0 q$ and $E\in J_{\delta}$, \,\,\, 
$D_{p/q, 2, \theta_j}(E) < -2 - \frac{3\delta}{4}$.} 
\end{equation} Since
$E\in J_{\delta}$, $\Delta_{p/q, 2}(E) < -\delta$. From Chambers'
formula (\ref{chamber_formula})
\[
D_{p/q, 2, \theta_j}(E) = \Delta_{p/q, 2}(E) -
2\cos\theta_jq.
\]
Therefore it suffices to verify that
\[
2\cos\theta_{j}q \geq 2 - \frac{\delta}{4}, \quad 0 \leq j \leq l_0q~,
\]
and this holds when $\widetilde{c}$ in the definition of $l_0$ is sufficiently small.

\medskip
For $1 \leq k \leq q$, set
\[Q_{jq + k}^{-1}(E + i\epsilon)= \left(\begin{array}{cc}
0     &      1 \\
-1    &   E + i\epsilon - 2\cos(2\pi\frac{p}{q}(jq + k) + \theta_{jq + k})  \\
\end{array}\right)~,
\]
and
\begin{equation*}
\begin{split}
T_{j}^{-1}(E + i\epsilon) &= (Q_{jq + 1}^{-1}\ldots{Q_{(j
+1)q}^{-1}})(E + i\epsilon), \textrm{\ } 0 \leq j\leq l_0 - 1,
\\
\Phi^{-1}_{l_0q}(E + i\epsilon) &=
(T_{0}^{-1}\ldots{T_{l_0 - 1}^{-1}})(E + i\epsilon).
\end{split}
\end{equation*}
Then
\begin{equation}\label{matrix_eq_for_tildeG}
\left(\begin{array}{cc}
\widetilde{G}^{[1, \infty)}(1, 1)\\
1\\
\end{array}\right) = \Phi^{-1}_{l_0q}(E + i\epsilon)\left(\begin{array}{cc}
\widetilde{G}^{[1, \infty)}(1, l_{0}q + 1)\\
\widetilde{G}^{[1, \infty)}(1, l_{0}q)\\
\end{array}\right).
\end{equation}

Let us verify the assumptions of Theorem~\ref{matrixTheorem}.
According to the definition of $T^{-1}_j$, we have $\det(T^{-1}_j(E + i\epsilon)) = 1$.
Next, we claim that 
\begin{equation}\label{eq:qz}
|\mathrm{Tr} (T^{-1}_j(E + i\epsilon))| >  2 + \frac{\delta}{2}~.
\end{equation}
Indeed, define
\[\widehat{Q}_{jq + k}^{-1}(E + i\epsilon)= \left(\begin{array}{cc}
0     &      1 \\
-1    &   E + i\epsilon - 2\cos(2\pi\frac{p}{q}(jq + k) + \theta_{jq + 1})  \\
\end{array}\right)~,
\]
and
\begin{equation*}
\widehat{T}_{j}^{-1}(E + i\epsilon) = (\widehat{Q}_{jq + 1}^{-1}\ldots{\widehat{Q}_{(j
+1)q}^{-1}})(E + i\epsilon)~, \,  0 \leq j\leq l_0 - 1~.
\end{equation*}
The matrices $\widehat{T}_j$ are one-period transfer
matrices of the periodic almost Mathieu operator, therefore from (\ref{eq:iii})  $|\mathrm{Tr}(\widehat{T}^{-1}_j(E))| >  2 + \frac{3\delta}{4}$. To pass from $\widehat{T}_j^{-1}(E)$ to $T_j^{-1}(E+i\epsilon)$, we observe that
\[\begin{split}
\| \widehat{Q}_{jq + k}^{-1}(E + i\epsilon) - {Q}_{jq + k}^{-1}(E + i\epsilon) \|
&\leq 2 |\theta_{jq+k}-\theta_{jq+1}| \leq 4\pi\eta q~,\\
\|\widehat{Q}_{jq + k}^{-1}(E + i\epsilon)  - \widehat{Q}_{jq + k}^{-1}(E) \|
&= \epsilon = 20^{1/4} e^{- \frac{c''\delta\widetilde{q}}{q}} \leq 3 e^{- \frac{c''\delta}{q^2\eta}}~,
\end{split} \]
hence 
\[ \| \widehat{T}_j^{-1} (E) -  T_j^{-1} (E + i\epsilon) \| \leq (4\pi\eta q^2 + 3 e^{- \frac{c''\delta}{q^2\eta}})C_1^{q-1}\leq \eta C_2^q \leq \frac{\delta^2}{8} \leq \frac\delta8 \]
for sufficiently large $C$ in the definition of $\eta$, where we used the inequalities
\[ \| \widehat{Q}_{jq + k}^{-1}(E)\|~, \| {Q}_{jq + k}^{-1}(E + i\epsilon)\| \leq C_1~,
\quad \left| \frac{\tilde{p}}{\tilde{q}} - \frac{p}{q} \right|
\leq \eta~.\] 
Therefore
\[ |\mathrm{Tr} (T^{-1}_j(E + i\epsilon))| \geq2 + \frac{3\delta}{4}  - 2 \, \frac{\delta}{8} \geq 2  + \frac{\delta}{2}~,\]
as claimed in (\ref{eq:qz}). In particular, denoting the eigenvalues of the matrix $T^{-1}_j(E + i\epsilon)$
by $e^{\pm (\gamma_j + i\zeta_j)}$, we have:
\[ 
\gamma_j > \operatorname{arccosh} \left(1+ \frac{\delta}{4}\right) \geq c \sqrt{\delta}~.\]
Next, $\|T_{j}^{-1}(E + i\epsilon)\| \leq C^q$ and, assuming $\left| \frac{\tilde{p}}{\tilde{q}} - \frac{p}{q} \right|
\leq \eta$,
\begin{equation}\label{eq:normT}
\| T_{j}^{-1}(E + i\epsilon) - T_{j+1}^{-1}(E + i\epsilon)\| \leq 
C^q  q^2 \eta  \leq (2C)^q \eta~.
\end{equation}
Denoting by $\phi_j^\pm$ the unit eigenvectors of $T_{j}^{-1}(E+i\epsilon)$ and by $U_j$ the
matrix with columns $\phi_j^\pm$, and observing that
\[ \|T_{j}^{-1}(E + i\epsilon)  \phi_j^+\| \leq \|T_{j}^{-1}(E + i\epsilon) \phi_j^-\| + \|T_{j}^{-1}(E + i\epsilon) \| \|\phi_j^+ - \phi_j^-\|~,\] 
we deduce that
\begin{equation}\label{eq:Udet}
|\det U_j| \geq \|\phi_j^+ - \phi_j^-\| \geq \frac{e^{\gamma_j} - e^{-\gamma_j}}{\| T_j^{-1}(E+i\epsilon)\| }\geq\frac{\sqrt{\delta}}{2C^q}~.\end{equation}
Let $\psi_j^\pm$ be the unit eigenvectors of $(T_j^{-1}(E + i\epsilon))^*$, then using (\ref{eq:normT})
\[ 
|\langle\psi_{j+1}^\mp,  \phi_j^\pm  \rangle|= \frac{\| (\psi_{j+1}^\mp)^* (T_{j+1}^{-1}(E + i\epsilon) - e^{\pm \gamma_j}) \phi_j^\pm \|}{|e^{\mp \gamma_{j+1}} - e^{\pm\gamma_{j}}|} \leq \frac{2 (2C)^q \eta}{\sqrt{\delta}}~.\]
Decompose
\[ \phi_{j}^\pm = a_j^{\pm} \phi_{j+1}^\pm + b_j^{\pm} \phi_{j+1}^\mp~;\]
then we can assume, possibly modifying the phases of $\phi_j^\pm$ (starting from the last
one and going backwards), that $a_j^\pm \geq 0$. Then, using (\ref{eq:Udet}),
\[ |b_j^{\pm}| = \frac{|\langle \phi_j^\pm, \psi_{j+1}^\mp\rangle|}{|\langle \phi_{j+1}^\mp, \psi_{j+1}^\mp\rangle|}
\leq \frac{4(2C^2)^q\eta}{\delta} \leq \frac{C_1^q \eta}{\delta}~, \,\,\,
|a_j^\pm - 1| \leq  \frac{C_1^q \eta}{\delta}~.\]
If $\eta = \delta^{2}/\tilde{C}^q$ with sufficiently large $\tilde{C}$ and using (\ref{eq:Udet}), we deduce that
\[ \| \phi_j^\pm - \phi_{j+1}^\pm \| \leq \frac{2 C_1^q \eta}{\delta} \leq \frac{1}{2} \gamma_j e^{-\gamma_j}|\det U_{j+1}|~.\]
Finally, $\phi_{\ell_0 - 1}^+$ is proportional to the vector 
\[u =  \left(\begin{array}{cc}
\widetilde{G}^{[1, \infty)}(1, l_{0}q + 1)\\
\widetilde{G}^{[1, \infty)}(1, l_{0}q)\\
\end{array}\right)~.\]
Applying Theorem~\ref{matrixTheorem} with $\beta = 1/2$  we obtain:
\[
\frac{1}{\|u\|} \|\Phi^{-1}_{l_0q}\left(u\right)\| \geq \frac{1}{2} e^{\frac{1}{2} \sum_{i =
1}^{l_0 - 1}{\gamma_i}} \geq \frac{1}{2} e^{\frac{c}{2} l_0\sqrt{\delta}}~,
\]
whence, recalling (\ref{matrix_eq_for_tildeG}), the  rough bound
(\ref{eqboundforgreenfunctiononremainingintervals}), and
the definition $l_0 =
[\frac{\widetilde{c}\widetilde{q}\sqrt{\delta}}{q}]$,
\[
|\widetilde{G}^{[1, \infty)}(1, l_{0}q)| \leq \|u\|
\leq\frac{4}{\epsilon} e^{- \frac{c}{2} l_0\sqrt{\delta}} \leq
\frac{4}{\epsilon} e^{- \frac{c'\delta \tilde{q}}{q}}~. \]
This concludes the proof of (ii).

\medskip
The proof of the version of Theorem~\ref{lebesgueMeasure} with $J_\delta^{(2)}$
is essentially the same as the one given above for $J_\delta^{(1)}$.
The only difference  appears in the choice of $l_0$, as follows.
Let 
\[ S_-\left(\frac{p}{q}, 2\right) = \{E_1, \cdots, E_q \}~, \quad \text{where} \,\, \Delta_{p/q,2}(E_i) = 0
\,\, \text{for} \,\, i = 1,2,\cdots, q~. \]
Then the estimate
$|\Delta_{p/q,2}(E)|>\delta$ in the definition of $J_\delta$ is replaced
with  $\min_{1 \leq i \leq q} |E -
E_i| > \delta$. From the inequality
(\ref{eq.dot}) below, the latter condition implies the
former one with $\frac{\delta q}{e}$ instead of $\delta$.
\qed

\section{Proof of Theorem $\ref{mainTheorem}$}\label{sec:mainTheorem}

First, we prove a sufficient condition for $\dimension\,( S(\alpha, 2)) = 0$. Throughout
this section, we assume that all the fractions are reduced, i.e., of the form $\frac{p}{q}$,
where $q \geq 0$ and $p,q$ are relatively prime. Also, we follow the notation of
Theorem~\ref{lebesgueMeasure}, with $J_\delta = J_\delta^{(1)}$.

\begin{lemma}\label{lemma:suff}
Let $\alpha \in \R \setminus \Q$ be such that there exists a sequence
of rational approximations
\[ \alpha_j = \frac{p_j}{q_j} \underset{{j \to \infty}}{\longrightarrow} \alpha \]
for which the following conditions hold true for $j = 1,3,5,7,\dots$ :
\begin{enumerate}
\item $|\alpha_{j+1} - \alpha_j| < \min(\eta(p_j, q_j, q_j^{-j}), q_j^{-j})$.
 \item $|\alpha - \alpha_{j+1}| < q_{j+1}^{-(j+1)}$.
\item $q_{j+1} > q_j^j$ and also
$\frac{C(p_j,q_j)}{q_j^{-j}} \exp \left\{-\frac{q_j^{-j} q_{j+1}}{C(p_j,q_j)}\right\}  \leq \frac{1}{q_{j+1}^j}$,
\end{enumerate}
Then $\dimension\, (S(\alpha, 2)) = 0$.
\end{lemma}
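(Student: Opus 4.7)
The plan is to construct, for each sufficiently large odd $j$, a two-part cover of $S(\alpha, 2)$ so that Lemma~\ref{hausdorff_dim_lemma} applies with exponents as large as desired; letting these exponents tend to infinity will give $\dimension(S(\alpha, 2)) = 0$. The starting point is H\"older continuity of the spectrum: by Proposition~\ref{continuity_of_spectra} combined with condition~(2),
\[ S(\alpha, 2) \subseteq \{E : \dist(E, S(\alpha_{j+1}, 2)) \leq \epsilon_j\}~, \quad \epsilon_j := 6\sqrt{2|\alpha - \alpha_{j+1}|} \leq 6\sqrt{2}\, q_{j+1}^{-(j+1)/2}~.\]
Following the philosophy of Section~\ref{sec:lebesgueMeasure}, I would split $S(\alpha_{j+1}, 2) = \mathcal{X}_j \cup \mathcal{Y}_j$ with $\mathcal{X}_j := S(\alpha_{j+1}, 2) \cap J_{q_j^{-j}}^{(1)}$ and $\mathcal{Y}_j := S(\alpha_{j+1}, 2) \setminus J_{q_j^{-j}}^{(1)}$, and produce a covering family for the $\epsilon_j$-neighbourhood of each piece.

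For $\mathcal{Y}_j$ I would use that $\Delta_{p_j/q_j, 2}$ is a monic polynomial of degree $q_j$ with real simple zeros $E_1, \ldots, E_{q_j}$, and that Lemma~\ref{last_lemma} forces $|\Delta_{p_j/q_j, 2}'(E_i)| \geq q_j$ (every summand in $\sum_i 1/|\Delta'(E_i)| = 1/q_j$ being positive). The estimate $(\ref{eq.dot})$ referenced at the end of Section~\ref{sec:lebesgueMeasure} then places $\mathcal{Y}_j$ inside $q_j$ intervals of radius $\leq e\, q_j^{-j-1}$ around the $E_i$; after $\epsilon_j$-enlargement this is a cover by $q_j$ intervals of total measure $\leq 2e\, q_j^{-j} + 2 q_j \epsilon_j \leq C\, q_j^{-j}$, the last inequality using $q_{j+1} > q_j^j$ from condition~(3). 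For $\mathcal{X}_j$, condition~(1) makes Theorem~\ref{lebesgueMeasure} applicable with $\delta = q_j^{-j}$, and the quantitative half of condition~(3) gives $\meas(\mathcal{X}_j) \leq q_{j+1}^{-j}$. A boundary-point count shows $\mathcal{X}_j$, being the intersection of the $q_{j+1}$ bands of $S(\alpha_{j+1}, 2)$ with the complement of the $q_j$ zero-intervals of $\Delta_{p_j/q_j, 2}$, has at most $q_{j+1} + q_j \leq 2 q_{j+1}$ connected components. Enlarging each by $\epsilon_j$ yields a cover by at most $2 q_{j+1}$ intervals of total measure $\leq q_{j+1}^{-j} + 4 q_{j+1}\epsilon_j \leq C\, q_{j+1}^{-(j-1)/2}$.

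For any fixed $\beta > 0$, once $j$ is so large that $(j-1)/2 \geq \beta$, the two covers above verify the hypotheses of Lemma~\ref{hausdorff_dim_lemma} with $\beta_1 = \beta_2 = \beta$ (reindexing by the subsequence of large odd $j$ and padding trivially at the start), yielding $\dimension(S(\alpha, 2)) \leq 1/(1+\beta)$; letting $\beta \to \infty$ concludes the proof. The main technical point, which is precisely what dictates the specific exponents in conditions~(1)--(3), is to ensure that the H\"older-shell contribution $q_{j+1}\epsilon_j$ is absorbed by the intrinsic measure bound $q_{j+1}^{-j}$: this is what forces the exponent $(j+1)/2$ in condition~(2) and the separation $q_{j+1} > q_j^j$ in condition~(3), while condition~(1) is needed simply in order to invoke Theorem~\ref{lebesgueMeasure}.
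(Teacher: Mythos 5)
Your proposal is correct and follows essentially the same route as the paper: H\"older continuity to pass to $S(\alpha_{j+1},2)$, the same split into $J_\delta$ and $J_\delta^c$ with $\delta = q_j^{-j}$, Theorem~\ref{lebesgueMeasure} plus condition~(3) for the $J_\delta$ piece, the $\Delta'$ bound from Lemma~\ref{last_lemma} and $(\ref{eq.dot})$ for the $J_\delta^c$ piece, and then Lemma~\ref{hausdorff_dim_lemma}. The only cosmetic differences are that you use the pointwise bound $|\Delta'(E_i)| \geq q_j$ rather than the full sum rule (giving $2e q_j^{-j}$ instead of the paper's $2e q_j^{-j-1}$, which is still more than enough) and that you count up to $2q_{j+1}$ components rather than $q_{j+1}$, which is the safer bound in any case.
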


\begin{proof}
Fix an odd $j \geq 1$. From Proposition~\ref{continuity_of_spectra},
\begin{equation}\label{eq:cont}
S(\alpha, 2) \subset S(\alpha_{j+1}, 2)
    + \left(- 6 \sqrt{2 |\alpha - \alpha_{j+1}|}, \, 6 \sqrt{2 |\alpha - \alpha_{j+1}|}\right)~,
\end{equation}
where for $A, B \subset \R$
\[ A + B = \left\{ a + b \, \big| \, a \in A, \, b \in B \right\}~. \]
Now we consider the operator $H_{\alpha_j, 2, \theta}$ corresponding to the fixed $\alpha_j$,
and let $\Delta(E) = \Delta_{\alpha_j, 2}(E)$ be as in (\ref{chamber_formula}). Recall that
\[ J_\delta = \left\{ E \, \big| \, |\Delta(E)| > \delta \right\}~,\]
and we set $\delta = q_j^{-j}$. Then
\[ S(\alpha_{j+1}, 2) = \Big(S(\alpha_{j+1}, 2) \cap J_\delta \Big)
    \bigcup \Big( S(\alpha_{j+1}, 2) \cap J_\delta^c \Big)~, \]
where $J_\delta^c = \R \setminus J_\delta$. Therefore
\begin{multline}\label{eq:cont2}
S(\alpha, 2) \subset\\
    \left[ \Big( S(\alpha_{j+1}, 2) \cap J_\delta \Big)
        + \left(- 6 \sqrt{2 |\alpha - \alpha_{j+1}|}, \, 6 \sqrt{2 |\alpha - \alpha_{j+1}|}\right) \right] \\
    \bigcup
    \left[ \Big( S(\alpha_{j+1}, 2) \cap J_\delta^c \Big)
        + \left(- 6 \sqrt{2 |\alpha - \alpha_{j+1}|}, \, 6 \sqrt{2 |\alpha - \alpha_{j+1}|}\right)\right]~.
\end{multline}
The set $S(\alpha_{j+1}, 2) \cap J_\delta$ is the union of at most $q_{j+1}$ intervals.
By Theorem~\ref{lebesgueMeasure} and assumption (3)
\[ \meas( S(\alpha_{j+1}, 2) \cap J_\delta)
    \leq \frac{C(p_j,q_j)}{q_j^{-j}} \exp \left\{-\frac{q_j^{-j} q_{j+1}}{C(p_j,q_j)}\right\}  \leq \frac{1}{q_{j+1}^j}~.\]
Therefore the set
\[ \Big( S(\alpha_{j+1}, 2) \cap J_\delta \Big)
        + \left(- 6 \sqrt{2 |\alpha - \alpha_{j+1}|}, \, 6 \sqrt{2 |\alpha - \alpha_{j+1}|}\right) \]
can be covered by $q_{j+1}$ intervals of total length smaller than
\[ \frac{1}{q_{j+1}^j} + 2 q_{j+1} \cdot 6 \sqrt{2 |\alpha - \alpha_{j+1}|}
    \leq \frac{1}{q_{j+1}^j} + \frac{C}{q_{j+1}^{\frac{j-1}{2}}} \leq \frac{C'}{q_{j+1}^{\frac{j-1}{2}}}~,\]
where $C,C'>0$ are some constants. Here the first inequality follows from assumption (2).


Now we pass to the second set in (\ref{eq:cont2}). We have:
\[ S(\alpha_{j+1}, 2) \cap J_\delta^c \subset J_\delta^c~, \]
and $J_\delta^c$ is the union of $q_j$ intervals:
\[ J_\delta^c = \bigcup_{\nu = 1}^{q_j} I_\nu
= \bigcup_{\nu=1}^{q_j} \left[ E^1_\nu, \, E^2_\nu \right]~. \]
Let us show that
\begin{equation}\label{eq:jdeltac}
\meas(J_\delta^c) \leq \frac{2e \delta}{q_j}~.
\end{equation}

Consider a band of $J_\delta^c$:\ $I_\nu =
[E^1_\nu,E^2_\nu]$. Denote by $E_\nu$\ the zero of $\Delta(E)$\ inside
$I_\nu$.\ Then
\[
\meas(I_\nu) = |E^1_\nu - E_\nu| + |E^2_\nu - E_\nu| \leq 2 |\widetilde{E}_\nu - E_\nu|~,\]
where $\widetilde{E}_\nu \in I_\nu$ is either $E^1_\nu$ or $E^2_\nu$;
$|\Delta(\widetilde{E}_\nu)| = \delta$.

Let us show that for $E \in I_\nu$
\begin{equation}\label{eq.dot}
|E - E_\nu | \leq \frac{e |\Delta(E)|}{|\Delta'(E)|}~.
\end{equation}
Then,
\[ \meas(I_\nu) \leq \frac{2 e \delta}{|\Delta'(E_\nu)|}~, \]
and by Lemma~\ref{last_lemma} we obtain
\[ \meas(J_\delta^c) \leq \sum_{k = 1}^{q_j}\frac{2e|\Delta(E)|}{|\Delta^{\prime}(E_\nu)|}
    \leq \frac{2e\delta}{q_j}~,\]
which proves (\ref{eq:jdeltac}).

\vspace{3mm}\noindent
We can assume without loss of generality that $\Delta(\cdot)$\ is increasing on
$I_\nu$ and that $\Delta(E) > 0$. We also assume for now that $\nu \neq 1, q_j$.

Denote by $E^0_\nu$\ the maximum of $\Delta(E)$\ just above
$E^2_\nu$.\  We have: $E^0_\nu > E^2_\nu$, since
$\Delta(E_\nu^2)= \delta < 2 \leq \Delta(E_\nu^0)$. Suppose $E^0_\nu > E > E_\nu$.\ Define
\begin{equation}\label{eq21}
f(E)\equiv {\frac{d}{dE}\ln{(\Delta(E))}} =
\frac{\Delta^\prime(E)}{\Delta(E)} = \sum_{k = 1}^{q_j}{\frac{1}{(E - E_k)}},
\end{equation}
and observe that
\begin{equation}\label{eq22}
f'(E) = \frac{d}{dE}f(E) = -\sum_{k = 1}^{q_j}{\frac{1}{(E -
E_k)^2}} < \frac{-1}{(E - E_\nu)^2}.
\end{equation}
Since $E^0_\nu$\ is the maximum of $\Delta(E)$,\ we have: $f(E^0_\nu)= 0$,
and for all $E\in{(E_\nu,E^0_\nu)}$,
\begin{multline}\label{eq23}
f(E) = -\int_{E}^{E^0_\nu}{f'(E')dE'} \\
> \int_{E}^{E^0_\nu}{\frac{dE'}{(E' - E_\nu)^2}} = \frac{1}{(E -
E_\nu)} - \frac{1}{(E^0_\nu - E_\nu)}.
\end{multline}
Now, consider $E\in{(E_\nu,E^2_\nu)}$,\ $E\in J_{\delta}^c$.\ We
have
\begin{multline}\label{limit_estim_for_E-E_nu}
\ln{\frac{\Delta(E)}{\Delta{(\widetilde{E})}}} =
\int_{\widetilde{E}}^{E}{f(E')dE'}
>
\ln\left(\frac{E -
E_\nu}{\widetilde{E} - E_\nu}\right) - \frac{E - \widetilde{E}}{E^0_\nu -
E_\nu},
\end{multline}
therefore, we obtain
\[
\frac{\Delta(E)}{\Delta(\widetilde{E})}
> \frac{1}{e}\left(\frac{E -
E_\nu}{\widetilde{E} - E_\nu}\right),
\]
which implies
\begin{equation}\label{estim_length_E-E_nu}
E - E_\nu < e\Delta(E)\lim_{\widetilde{E}\rightarrow
E_\nu}\frac{\widetilde{E} - E_\nu}{\Delta(\widetilde{E})} =
\frac{e\Delta(E)}{\Delta^\prime(E_\nu)}.
\end{equation}
This concludes the proof of (\ref{eq.dot}) for non-extremal bands. For extremal
bands we follow the strategy of \cite{last3}: due to monotonicity of $\Delta'$,
it is sufficient to consider the part of
the band directed towards the rest of the spectrum, and for this part of the band 
the previous argument is applied. Thus (\ref{eq:jdeltac}) is established. 

\vspace{2mm}
We have shown that $S(\alpha_{j+1}, 2) \cap J_\delta^c$ can be covered by $q_j$
intervals of total length at most $\frac{2e \delta}{q_j}$. Therefore
\[ \left[ \Big( S(\alpha_{j+1}, 2) \cap J_\delta^c \Big)
        + \left(- 6 \sqrt{2 |\alpha - \alpha_{j+1}|}, \, 6 \sqrt{2 |\alpha - \alpha_{j+1}|}\right)\right] \]
can be covered by $q_j$ intervals of total length at most
\[ \frac{2e \delta}{q_j} + 2 q_j \cdot 6 \sqrt{2|\alpha - \alpha_{j+1}|}
    \leq \frac{2e}{q_j^{j+1}} + \frac{Cq_j}{q_{j+1}^\frac{j+1}{2}} \leq \frac{C''}{q_j^{j-1}}~,\]
where the first inequality follows from assumption (2) and the definition of $\delta = q_j^{-j}$, and the
second inequality follows from assumption (3) of the Lemma.

Thus $S(\alpha, 2)$ can be covered by the union of $q_{j+1}$ intervals of total length
at most $\frac{C'}{q_{j+1}^{\frac{j-1}{2}}}$ and $q_{j}$ intervals of total length at most
$\frac{C''}{q_j^{j-1}}$. For any $\beta > 0$, the numbers $\frac{j-1}{2}$, $j-1$ exceed $\beta$
for sufficiently large $j$, therefore Lemma~\ref{hausdorff_dim_lemma} yields
\[ \dimension\, ( S(\alpha, 2)) \leq \inf_{\beta > 0} \frac{1}{1 + \beta} = 0~. \]
\end{proof}

Irrational $\alpha \in [0, 1]$ satisfying the assumptions of Lemma~\ref{lemma:suff} can
be constructed via their continued fraction expansion (see, e.g., \cite{hardy_wright}):
\begin{equation}\label{alpha}
\alpha = [n_1, n_2, n_3, \ldots]
    = \frac{1}{n_1 + \frac{1}{n_2 + \frac{1}{n_3 + \ldots}}}~.
\end{equation}
For example, the assumptions of Lemma~\ref{lemma:suff} are satisfied if the sequence of quotients,
$\{ n_j \}_{j=1}^\infty$, grows sufficiently fast. More generally, it is sufficient to control a subsequence of quotients $\{n_{j_k}\}_{k=1}^\infty$
so that the assumptions of Lemma~\ref{lemma:suff} hold, to ensure that, for the resulting
$\alpha$, we have $\dimension\, (S(\alpha, 2)) = 0$.

The set of $\alpha$'s obtained in this way can be shown to be dense $G_\delta$. However, it is
technically simpler to show directly that the set of $\alpha$'s satisfying the assumptions of
Lemma~\ref{lemma:suff} contains a dense $G_\delta$ set. The following lemma, combined with
Lemma~\ref{lemma:suff}, implies Theorem~\ref{mainTheorem}.

\begin{lemma}\label{lemma:gdelta} There exists a dense $G_\delta$ set of $\alpha$'s in $\R$ which satisfy the
assumptions of Lemma~\ref{lemma:suff}.
\end{lemma}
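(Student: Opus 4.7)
The plan hinges on a structural observation about the hypotheses of Lemma~\ref{lemma:suff}: conditions (1), (2), (3) only couple $\alpha_j$ with $\alpha_{j+1}$ (for $j$ odd), so the pairs $(\alpha_1, \alpha_2), (\alpha_3, \alpha_4), (\alpha_5, \alpha_6), \ldots$ are constrained independently of one another. The only way distinct pairs interact is through $\alpha$ itself via condition (2). Consequently, the property that $\alpha$ admits such a sequence reduces to a collection of Diophantine conditions, one per odd $j$, and I realize the desired set as a dense $G_\delta$ by Baire category.

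Concretely, for each odd $j \geq 1$ I would define
\[
\mathcal{D}_j = \bigcup_{(p_j/q_j,\, p_{j+1}/q_{j+1})} \bigl(p_{j+1}/q_{j+1} - q_{j+1}^{-(j+1)},\, p_{j+1}/q_{j+1} + q_{j+1}^{-(j+1)}\bigr)~,
\]
where the union runs over all pairs of reduced rationals $(p_j/q_j, p_{j+1}/q_{j+1})$ with $q_j \geq j$ satisfying conditions (1) and (3) of Lemma~\ref{lemma:suff}. Each $\mathcal{D}_j$ is visibly open, so $\bigcap_{j \text{ odd}} \mathcal{D}_j$ is $G_\delta$. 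To verify density, fix a nonempty open interval $I \subset \R$; I pick a rational $p_j/q_j \in I$ with $q_j \geq j$ so large that $q_j^{-j}$ is much smaller than the length of $I$, and then choose $p_{j+1}/q_{j+1}$ with $q_{j+1}$ large enough to satisfy (3) (both the polynomial bound $q_{j+1} > q_j^j$ and the exponential bound in Theorem~\ref{lebesgueMeasure}), lying within the neighborhood of $p_j/q_j$ prescribed by (1). Such $p_{j+1}/q_{j+1}$ exists because rationals of arbitrarily large denominator are dense in any interval. Taking $q_{j+1}$ huge also forces the surrounding interval of radius $q_{j+1}^{-(j+1)}$ into $I$, so $\mathcal{D}_j \cap I$ is nonempty.

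By the Baire category theorem, $\bigcap_{j \text{ odd}} \mathcal{D}_j$ is a dense $G_\delta$ set in $\R$. For any $\alpha$ in this intersection and each odd $j$, there exists a witnessing pair $(\alpha_j, \alpha_{j+1})$; concatenating these for $j = 1, 3, 5, \ldots$ gives a candidate sequence whose odd-indexed terms satisfy $q_j \geq j$, hence
\[
|\alpha - \alpha_j| \leq |\alpha - \alpha_{j+1}| + |\alpha_{j+1} - \alpha_j| < q_{j+1}^{-(j+1)} + q_j^{-j} \longrightarrow 0~,
\]
and similarly $|\alpha - \alpha_{j+1}| < q_{j+1}^{-(j+1)} \to 0$ since $q_{j+1} > q_j^j \geq j^j$. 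Thus the concatenated sequence converges to $\alpha$ and witnesses the hypotheses of Lemma~\ref{lemma:suff}. The step I would anticipate being the main obstacle in a naive approach—namely, passing from a family of independently chosen witnesses at each finite stage to a single coherent infinite approximating sequence—is precisely what is bypassed by the structural decoupling of the pairs in conditions (1)--(3); no diagonal extraction or uniqueness argument is needed, leaving only elementary Diophantine estimates. Combining this lemma with Lemma~\ref{lemma:suff} completes the proof of Theorem~\ref{mainTheorem}.
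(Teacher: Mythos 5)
Your approach is correct and genuinely different from the paper's. The paper constructs a \emph{nested} sequence of open sets $\mathcal{K}_0 \supset \mathcal{K}_1 \supset \cdots$, where $\mathcal{K}_{j+1}$ is a union of punctured $\epsilon$-neighborhoods of rationals in $\mathcal{K}_j$, and then proves density by inductively building a chain of nested closed intervals and invoking Cantor's lemma; the witnessing sequence for $\alpha \in \mathcal{K} = \bigcap_j \mathcal{K}_j$ is then extracted by unwinding the nesting. Your observation that conditions (1)--(3) couple only the consecutive pair $(\alpha_j,\alpha_{j+1})$ for each fixed odd $j$, with no interaction between distinct pairs beyond the shared limit $\alpha$, lets you dispense with the nesting entirely: each $\mathcal{D}_j$ encodes a self-contained Diophantine requirement, each is trivially open and dense (rationals of arbitrarily large denominator are dense inside the $\eta$-neighborhood, and taking $q_{j+1}$ huge satisfies both clauses of (3) and shrinks the surrounding interval into $I$), and Baire does the rest. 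This is cleaner and replaces the paper's two-step density argument with a one-line application of Baire. What the paper's nesting buys, which you replicate by imposing $q_j \geq j$ in the definition of $\mathcal{D}_j$, is the guarantee that $q_j \to \infty$ along the odd indices; without that, condition (1) alone would not force $|\alpha_{j+1}-\alpha_j| \to 0$.

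One point you should not leave implicit: Lemma~\ref{lemma:suff} requires $\alpha \in \R \setminus \Q$, and you never verify that $\bigcap_{j\text{ odd}}\mathcal{D}_j$ consists of irrationals. This is true, and it follows from your own constraints: if $\alpha = a/b$ in lowest terms, then for $j$ large enough that $q_{j+1} > q_j^j \geq j^j > b$ we have $\alpha \neq p_{j+1}/q_{j+1}$, whence $|\alpha - p_{j+1}/q_{j+1}| \geq 1/(bq_{j+1}) > q_{j+1}^{-(j+1)}$, contradicting membership in $\mathcal{D}_j$. Alternatively, one can simply intersect with $\R \setminus \Q$, itself a dense $G_\delta$, at no cost. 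Either fix is one sentence, but without it the proof does not quite deliver the hypotheses of Lemma~\ref{lemma:suff}. (The paper handles this by using \emph{punctured} neighborhoods and then proving by induction that $\mathcal{K}_j$ excludes rationals with denominator $\leq j$.)
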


\begin{rmk*} As  mentioned, one can consider  only $\alpha \in [0, 1]$; however,
here we work with arbitrary $\alpha$ to keep the notation simpler.
\end{rmk*}

\begin{proof}
We construct the $G_\delta$ set $\mathcal{K}$ as an intersection
of open sets. Set $\mathcal{K}_0 = \R$. For $j = 0,1,2,\ldots$, we define $\mathcal{K}_{j+1}$ inductively
as follows. For $\frac{p}{q} \in \mathcal{K}_{j}$, one can choose $\epsilon = \epsilon(\frac{p}{q}, j)$ sufficiently
small so that 
\[ \epsilon < \min( \eta(p,q, q^{-j}), q^{-j})~, \,\,
\left(\frac{p}{q} - \epsilon, \frac{p}{q}+\epsilon\right) \subset \mathcal{K}_j~,\]
and the following conditions hold for any
\[\frac{\widetilde{p}}{\widetilde{q}} \in \left( \frac{p}{q}-\epsilon, \, \frac{p}{q}\right)
    \cup \left(\frac{p}{q}, \, \frac{p}{q} + \epsilon \right)~:\]
\begin{enumerate}
\item $\widetilde{q} > q^j$;
\item $\frac{C(p,q)}{q^{-j}} \exp\left\{- \frac{q^{-j}\widetilde{q}}{C(p, q)} \right\} \leq \frac{1}{\widetilde{q}^j}$. 
\end{enumerate}
Indeed, there is only a finite number of $\frac{\widetilde{p}}{\widetilde{q}} \in (\frac{p}{q}-1, \frac{p}{q}+1)$
which fail   either of the conditions (1)--(2); therefore, for sufficiently small $\epsilon > 0$, none of them
is in the punctured $\epsilon$-neighborhood of $\frac{p}{q}$.

Then denote
\[ \mathcal{K}_{j+1} = \bigcup_{\frac{p}{q} \in \mathcal{K}_j} \left( \frac{p}{q}-\epsilon(\frac{p}{q}, j), \, \frac{p}{q}\right)
    \cup \left(\frac{p}{q}, \, \frac{p}{q} + \epsilon(\frac{p}{q}, j) \right)~. \]
According to the choice of $\epsilon$, $\mathcal{K}_0 \supset \mathcal{K}_1 \supset \mathcal{K}_2 \supset \cdots$,
and all these sets are open. Set $\mathcal{K} = \bigcap_{j=1}^\infty \mathcal{K}_j$; it is a $G_\delta$
set.

Next, the set $\mathcal{K}$ is dense. Indeed, for any $\alpha \in \R$ and any $\eta > 0$, there
exists $\frac{p_0}{q_0} \in \R$ so that $|\alpha - \frac{p_0}{q_0}| < \frac{\eta}{2}$.
Choose a non-degenerate closed interval $I_0$ so that
\[ I_0 \subset \left(\frac{p_0}{q_0}, \frac{p_0}{q_0} + \min\left\{\epsilon\left(\frac{p_0}{q_0}, 0\right), \frac{\eta}{2}\right\}\right) \subset \mathcal{K}_1~. \]
Now proceed inductively: for any $j\geq 1$, we choose a rational $\frac{p_j}{q_j}$ in the interior of $I_{j-1}$
and choose $\eta_j > 0$ so that
$(\frac{p_j}{q_j}, \frac{p_j}{q_j} + \eta_j) \subset I_{j-1} \cap \mathcal{K}_{j+1}$, and then choose a non-degenerate
closed interval $I_j \subset (\frac{p_j}{q_j}, \frac{p_j}{q_j} + \eta_j)$. This is possible
since $\frac{p_j}{q_j}$ is an interior point of $I_{j-1}$ (therefore $(\frac{p_j}{q_j}, \frac{p_j}{q_j} + \eta_j) \subset I_{j-1}$ for sufficiently
small $\eta_j$), and is also a rational point in $I_{j-1} \subset \mathcal{K}_j$ (therefore $(\frac{p_j}{q_j}, \frac{p_j}{q_j} + \eta_j) \subset \mathcal{K}_{j+1}$
for $\eta_j < \epsilon(\frac{p_j}{q_j}, j)$). According to Cantor's lemma, the nested sequence of
closed intervals $\{I_j\}_{j=0}^\infty$ has a common point $\alpha' \in \mathcal{K}$, and $|\alpha - \alpha'| < \eta$.

Finally, let us show that all the elements of $\mathcal{K}$ satisfy the assumptions of Lemma~\ref{lemma:suff}.
Choose $\alpha \in \mathcal{K}$. First, $\alpha$ is irrational. Indeed, it is sufficient to show that
$\mathcal{K}_j$ does not contain rational numbers $\frac{\widetilde{p}}{\widetilde{q}}$  with
$\widetilde{q} \leq j$. For $j = 0$, the statement is empty. Assume that, for any
$\frac{p}{q} \in \mathcal{K}_j$, $q > j$. Then, by condition (1) above, for any
$\frac{\widetilde{p}}{\widetilde{q}} \in \mathcal{K}_{j+1}$ we have:
\[ \widetilde{q} > (j+1)^j \geq j+1~. \]
This proves the induction step.

Next, for any $j = 1,3,5,7,\dots$, we have: $\alpha \in \mathcal{K}_{j+2}$; therefore, there exists
$\frac{p_{j+1}}{q_{j+1}} \in \mathcal{K}_{j+1}$ such that
\[ \left|\alpha - \frac{p_{j+1}}{q_{j+1}}\right| < \epsilon\left(\frac{p_{j+1}}{q_{j+1}}, j+1\right)~. \]
Now, there exists $\frac{p_j}{q_j} \in \mathcal{K}_{j}$ so that
\[ 0 < \left|\frac{p_{j+1}}{q_{j+1}} - \frac{p_j}{q_j}\right| < \epsilon\left(\frac{p_{j}}{q_{j}}, j\right)\]
Thus we have constructed the  sequence $\frac{p_j}{q_j}$ required in Lemma~\ref{lemma:suff}.

\end{proof}


\begin{thebibliography}{}
\bibitem{AJ} A.~Avila, S.~Jitomirskaya, \textit{The Ten Martini Problem}, Ann.\ of Math.\ (2) 170 (2009), no.\ 1, 303--342.

\bibitem{simon_vanmouche_avron} J.~Avron,\ P.~van~ Mouche and
B.~Simon, \textit{On the measure of the spectrum for the almost
Mathieu operator}. Commun.\ Math.\ Phys.\ $\mathbf{132}$,\ 103--118
(1990)



\bibitem{conjhalf2} J.~Bell and R.~B.~Stinchcombe, \textit{Hierarchical band clustering and
fractal spectra in incommensurate systems}. J.\ Phys.\ \textbf{A}
$\mathbf{20}$,\ L739--L744 (1987)




\bibitem{Bourg} J.\ Bourgain, 
Green's function estimates for lattice Schr�dinger operators and applications. Annals of Mathematics Studies, 158. Princeton University Press, Princeton, NJ, 2005. x+173 pp. 

\bibitem{chembers} W.~Chambers, \textit{Linear network model for magnetic breakdown in
two dimensions}. Phys.\ Rev.\ \textbf{A} $\mathbf{140}$,\ 135--143
(1965)

\bibitem{choi-elliot-yui} M.~D.~Choi, G.~A.~Elliot and N.~Yui,
\textit{Gauss polynomials and the rotation algebra}, Invent.\ Math.\
$\mathbf{99}$,\ 225--246 (1990)






\bibitem{Falconer} K.~J.~Falconer,The geometry of fractal
sets. Cambridge: Cambridge University Press 1985



\bibitem{conjhalf3} T.~Geisel, R.~Ketzmerick and G.~Petshel, \textit{New class of
level statistics in quantum systems with unbounded diffusion}.
Phys.\ Rev.\ Lett.\ $\mathbf{66}$,\ 1651--1654 (1991)

%


\bibitem{hardy_wright} G.~H.~Hardy, E.~M.~Wright, 
An introduction to the theory of numbers. Fifth ed. Oxford: Oxford University Press 1979


\bibitem{HS} B.~Helffer and J.~Sj\"{o}strand,
\textit{Analyse semi-classique pour l'\'equation de Harper (avec application \`a l'\'equation de Schr�dinger avec champ magn\'etique)},
 M\'em. Soc. Math. France (N.S.) No. 34 (1988), 113 pp. (1989). 
  
\bibitem{Hofstadter} D.~R.~Hofstadter, \textit{Energy levels and wave functions of
Bloch electrons in a rational or irrational magnetic field}. Phys.\
Rev.\ B.\ $\mathbf{14}$,\ 2239--2249 (1976)

\bibitem{J}  S.\ Ya.\ Jitomirskaya, 
\textit{Almost everything about the almost Mathieu operator. II},
 XIth International Congress of Mathematical Physics (Paris, 1994), 373--382, Int. Press, Cambridge, MA, 1995. 





\bibitem{last2} Y.~Last, \textit{On the measure of gaps and spectra for
discrete 1D Schr\"{o}dinger operators}. Commun.\ Math.\ Phys.\
$\mathbf{149}$,\ 347--360 (1992)

\bibitem{last3} Y.~Last, \textit{Zero measure spectrum for the almost
Mathieu Operator}. Commun.\ Math.\ Phys.\ $\mathbf{164}$,\ 421--432
(1994)

\bibitem{last4} Y.\ Last, \textit{Almost everything about the almost Mathieu operator. I.}
XIth International Congress of Mathematical Physics (Paris, 1994), 366--372, 
Int. Press, Cambridge, MA, 1995

\bibitem{last_review} Y.~Last, \textit{Spectral theory of Sturm-Liouville
Operators on infinite intervals: A review of recent developments},
in W.~O~Amrein, A.~M.~Hinz and D.~B.~Pearson Eds.,
\textit{Sturm-Liouville Theory: Past and Present}, pp. 99--120,
Basel/Switzerland: Birkh\"{a}user Verlag 2005


\bibitem{last_wilkinson} Y.~Last and M.~Wilkinson, \textit{A sum rule for the dispersion
relations of the rational Harper's equation}. J.~Phys. \textbf{A}
$\mathbf{25}$,\ 6123--6133 (1992)


\bibitem{vmch} P.M.H.~van~Mouche, \textit{The coexistence problem for the discrete Mathieu operator}.
Commun.\ Math.\ Phys. $\mathbf{122}$,\ 23--34 (1989)



%

\bibitem{surace} S.~Surace, \textit{Positive Lyapunov exponent for a class of
ergodic Schr\"{o}dinger operators}. Commun.\ Math.\ Phys.\
$\mathbf{162}$,\ 529--537 (1994)

\bibitem{conjhalf1} C.~Tang and M.~Kohmoto, \textit{Global scaling properties of the
spectrum for a quasiperiodic Schr\"{o}dinger equation}. Phys.\ Rev.\
\textbf{B} $\mathbf{34}$,\ 2041--2044 (1986)



\bibitem{wilkonson_austin} M.~Wilkinson and E.~J.~Austin, \textit{Spectral
dimension and dynamics for Harper's equation}. Phys.\ Rev.\
\textbf{B} $\mathbf{50}$,\ 1420--1430 (1994)
 \end{thebibliography}
\end{document}